\titleformat*{\section}{\Large \normalfont \bfseries}
\titleformat*{\subsection}{\large \normalfont \bfseries}
\newcommand{\nc}{\newcommand}
\nc{\be}{\begin{equation}} \nc{\ee}{\end{equation}}
\nc{\ba}{\begin{array}} \nc{\ea}{\end{array}}
\nc{\bea}{\begin{eqnarray}} \nc{\eea}{\end{eqnarray}}
\nc{\ny}{\nonumber}
\nc{\ra}{\rangle}
\nc{\la}{\langle}
\nc{\lb}{\left(}
\nc{\rb}{\right)}
\nc{\pt}{\partial}
\nc{\D}{\Delta}
\nc{\kr}{\mathcal}
\nc{\lmb}{\lambda}
\nc{\sg}{\sigma}
\nc{\ep}{\epsilon}
\nc{\ora}{\overrightarrow}
\nc{\td}{\widetilde}
\nc{\tq}{\theta}
\nc{\seteq}{\mathbin{:=}}
\nc{\simto}{\xrightarrow{\,\sim\,}}
\nc{\Z}{{\mathbb Z}}
\nc{\nsr}{\scriptscriptstyle{\textsf{NSR}}}
\nc{\Vir}{\textsf{Vir}}
\nc{\Asl}{\widehat{\mathfrak{sl}}}
\nc{\rmF} {\mathrm{F}}
\nc{\calM}{\mathcal{M}}
\nc{\calA}{\mathcal{A}}
\nc{\calF}{\mathcal{F}}
\nc{\calL}{\mathcal{L}}
\nc{\calU}{\mathcal{U}}
\nc{\calE}{\mathcal{E}}
\nc{\calV}{\mathcal{V}}
\nc{\calH}{\mathcal{H}}
\nc{\sfV}{\textsf{V}}
\nc{\tr}{\mathrm{Tr}}
\nc{\red}{\color{red}}
\nc{\NS}{\scriptscriptstyle{\textsf{NS}}}
\nc{\vac}{\varnothing}
\nc{\G}{\textsf{G}}
\nc{\pure}{\mathrm{pure}}
\nc{\ul}{\underline}
\nc{\ol}{\overline}
\numberwithin{equation}{section}
\newtheorem{thm}{Theorem}[section]
\newtheorem{prop}{Proposition}[section]
\newtheorem{Remark}{Remark}[section]
\newtheorem{conj}{Conjecture}[section]
\newtheorem{defn}{Definition}[section]
\newtheorem{example}{Example}[section]
\tikzset{
    >=stealth',
    pil/.style={
           ->,
           shorten <=0pt,
           shorten >=0pt,},
        }
\title{$q$-deformed Painlev\'e $\tau$ function and $q$-deformed conformal blocks}
\date{}
\author{M.~A.~Bershtein, A.~I.~Shchechkin}
\begin{document}
\maketitle 
\begin{abstract}\vspace*{2pt}
We propose $q$-deformation of the Gamayun-Iorgov-Lisovyy formula for Painlev\'e $\tau$ function. 
Namely we propose the formula for $\tau$ function for $q$-difference Painlev\'e equation corresponding
to $A_7^{(1)\prime}$ surface (and $A_1^{(1)}$ symmetry) in Sakai classification.
In this formula $\tau$ function equals the series of $q$-Virasoro Whittaker conformal blocks
(equivalently Nekrasov partition functions for pure $SU(2)$ 5d theory).

\end{abstract}

\tableofcontents

\newpage

\section{Introduction}
\label{sec:intro}
The goal of this paper is to find $q$-deformation of suggested in \cite{GIL1207},\cite{GIL1302} formulas for Painlev\'e $\tau$ functions. More precisely we will do this for case of Painlev\'e III($D_8$). 

We briefly recall necessary details about this equation. In standard form it is a nonlinear second order differential equation on function $w(z)$, namely
\begin{equation}
w''=\frac{w'^2}{w}-\frac{w'}{z}+\frac{2w^2}{z^2}-\frac{2}{z}
\label{Pp} 
\end{equation}
This equation can be rewritten as system of two bilinear Toda-like equations on two $\tau$ functions \cite{BSRamon} 
\begin{equation}
1/2 D^2_{[\log z]}(\tau(z),\tau(z))=-z^{1/2}\tau_1(z)\tau_1(z),\quad 1/2 D^2_{[\log z]}(\tau_1(z),\tau_1(z))=-z^{1/2}\tau(z)\tau(z),
\end{equation}
where $D^2_{[\log z]}$ denotes second Hirota operator with respect to $\log z$, and $\tau_1$ is a B\"acklund transformation of  $\tau$ (the group of B\"acklund transformations of this equation is $\mathbb{Z}_2$). The function $w(z)$ is equal to $z^{1/2}\tau(z)^2/\tau_1(z)^2$, B\"acklund transformation acts on $w$ as $w \mapsto z/w$.

The Gamayun--Iorgov--Lisovyy formula for the $\tau$ function has the form \cite{GIL1302}
\begin{equation}
\tau(\sigma,s|z)=\sum_{n\in\mathbb{Z}}C(\sigma+n)s^n z^{(\sg+n)^2} \mathcal{F}((\sigma+n)^2|z), \label{GIL}
\end{equation}
where $s,\sigma$ are integration constants.  The function $\mathcal{F}(\Delta|z)$ denotes Whittaker limit of Virasoro conformal block in a representation of the highest weight $\Delta$ and the central charge $c=1$. The function $C(\sigma)=1/(\G(1-2\sigma)\G(1+2\sigma))$, where $\mathsf{G}$ is a Barnes $\mathsf{G}$ function. Function $\tau(\sigma,s|z)$ satisfies  
evident property $\tau(\sigma,s|z)=s^{-1}\tau(\sigma+1,s|z)$ (periodicity in $\sigma$) and B\"acklund transformation acts as $\tau_1 \propto \tau(\sigma+1/2,s|z)$. Therefore bilinear equations and formula for $w$ can be rewritten as (see \cite{BSRamon} for details)
\begin{equation}\label{sgToda} 
1/2 D^2_{[\log z]}(\tau(\sigma,s|z),\tau(\sigma,s|z))=-z^{1/2}\tau(\sigma+1/2,s|z)\tau(\sigma-1/2,s|z),
\end{equation}
\begin{equation}
w(z)=z^{1/2}\frac{\tau(\sigma,s|z)^2}{\tau(\sigma-1/2,s|z)\tau(\sigma+1/2,s|z)}\label{tautau1}
\end{equation}
The formula \eqref{GIL} was proven in \cite{ILT} and \cite{KGP}. Remark that the proof in \cite{KGP} is based on bilinear relations on Virasoro conformal blocks. 

It is natural to expect that there exists certain $q$-deformation of formula \eqref{GIL} which gives $\tau$~function for $q$-difference Painlev\'e equation. We follow Sakai approach to these equations \cite{SakaiCMP}, where discrete equations are associated with certain rational surfaces which are obtained by blow up of 9 points in $\mathbb{CP}^2$. These surfaces are parametrized by two complementary sublattices in Picard group of the surface, the last lattice is isomorphic to $\mathbb{Z}^{1,9}$. These two sublattices are denoted as $R$ and $R^\perp$ and specify so-called surface type and symmetry type correspondingly. The celebrated Sakai tables are given in Figs. \ref{Fig:Sakaisurf} and \ref{Fig:Sakaisym}.
\begin{figure}[h]
\begin{center}
	{\small
\begin{tikzcd}[row sep=scriptsize, column sep=scriptsize]
	& & & & & & & {A_7^{(1)}} \arrow[r d] &\\
	{A_0^{(1)}} \arrow[r] & {A_1^{(1)}}  \arrow[r] &	{A_2^{(1)}} \arrow[r] & {A_3^{(1)}}  \arrow[r] \arrow[r d] &	{A_4^{(1)}} \arrow[r] \arrow[d r] & {A_5^{(1)}} \arrow[r d] \arrow[r] \arrow[r d d] &	{A_6^{(1)}} \arrow[r] \arrow[r d] \arrow[r u] & {\boxed{A_7^{(1)\prime}}} \arrow[r d] \arrow[d d, bend left=30] & {A_8^{(1)}} \arrow[d d, bend left=30]
 \\
	& & & & D_4^{(1)}  \arrow[r]& D_5^{(1)} \arrow[r]  \arrow[rd]& D_6^{(1)} \arrow[r d] \arrow[r] & D_7^{(1)} \arrow[r] & D_8^{(1)}\\
	& & & &  &  & E_6^{(1)}\arrow[r] & E_7^{(1)}\arrow[r] & E_8^{(1)}
\end{tikzcd}}
\end{center}
\caption{Sakai classification, surface type \label{Fig:Sakaisurf}}
\end{figure}
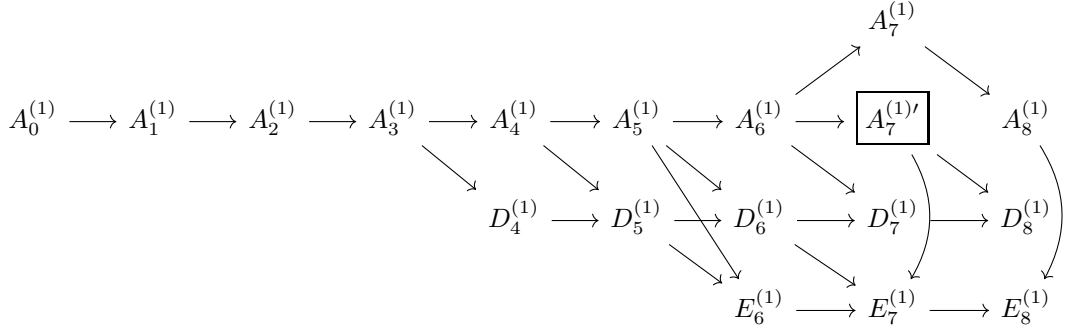
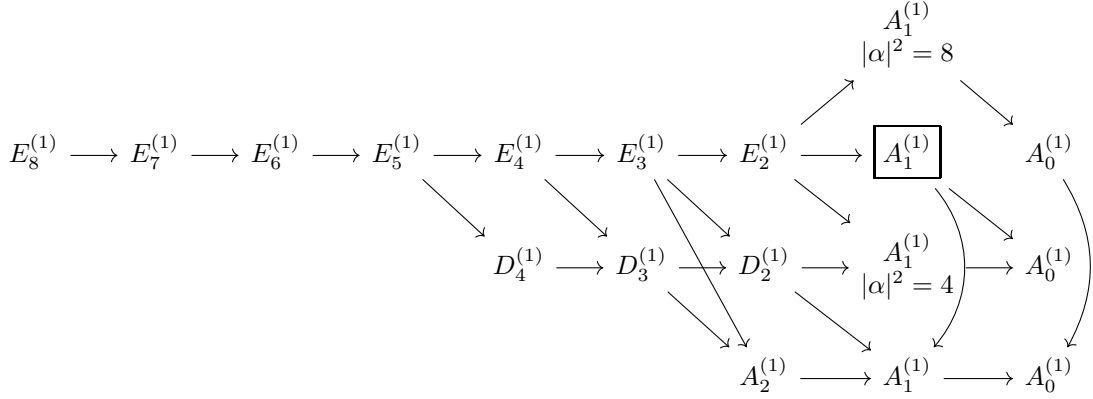
\begin{figure}[h]
\begin{center}
	{\small
	\begin{tikzcd}[row sep=scriptsize, column sep=scriptsize]
		& & & & & & & {\begin{matrix}A_1^{(1)} \\|\alpha|^2=8\end{matrix}} \arrow[r d] &\\
		{E_8^{(1)}} \arrow[r] & {E_7^{(1)}}  \arrow[r] &	{E_6^{(1)}} \arrow[r] & {E_5^{(1)}}  \arrow[r] \arrow[r d] &	{E_4^{(1)}} \arrow[r] \arrow[dr] & {E_3^{(1)}} \arrow[r d] \arrow[r] \arrow[r d d] &	{E_2^{(1)}} \arrow[r] \arrow[r d] \arrow[r u] & {\boxed{A_1^{(1)}}} \arrow[r d] \arrow[d d, bend left=40] & {A_0^{(1)}} \arrow[d d, bend left=30]\\
		& & & & D_4^{(1)}  \arrow[r]& D_3^{(1)} \arrow[r]  \arrow[r d]& D_2^{(1)} \arrow[r d] \arrow[r] & {\begin{matrix}A_1^{(1)} \\|\alpha|^2=4\end{matrix}} \arrow[r] & A_0^{(1)}\\
		& & & &  &  & A_2^{(1)} \arrow[r]& A_1^{(1)}\arrow[r] & A_0^{(1)}
	\end{tikzcd}}
	\end{center}
\caption{Sakai classification, symmetry type \label{Fig:Sakaisym}}
\end{figure}
Here $X_n^{(1)}$ are standard notations for affine root lattices. We do not explain precise meaning of notations like $D_2^{(1)}$ or $A_1^{(1)}, |\alpha|^2=4$, see \cite{SakaiCMP} or \cite{KNY} for details. Arrows in these figures indicate degenerations of the surfaces.

Recall that continuous Painlev\'e equations also correspond to the rational surfaces, namely for each Painlev\'e equation there exists so-called space of initial conditions \cite{Okamoto}. This space for \eqref{Pp} is the surface of $D_8^{(1)}$ type. This is the reason why this equation is called Painlev\'e III($D_8$). The $q$-deformation of this equation corresponds to the deformation of the surface (see \cite[Sec 7]{SakaiCMP}). Therefore we consider $q$-difference Painlev\'e equation corresponding to the surface type $A_7^{(1)\prime}$ and the symmetry type $A_1^{(1)}$ (the corresponding sublattices are drawn in box in Figs.~ \ref{Fig:Sakaisurf},~\ref{Fig:Sakaisym}). 

Such surface depends on two parameters which we denote by $q$ and $Z$ ($q$-deformed analogue of $z$ in \eqref{Pp}). Details are given Subsection \ref{ssec:Cremona}, here we write only $q$-deformation of the equation \eqref{Pp} 
\begin{equation}
G(qZ)G(q^{-1}Z)=\left(\frac{G(Z)-Z}{G(Z)-1}\right)^2
\end{equation}
This equation as well as its relation to Painlev\'e III$(D_8)$ is given in \cite{Ramani:2015}. We did not found in the literature the definition of $\tau$ function for this difference equation. So we propose the definition following Tsuda approach \cite{Tsuda2006} (see also \cite{KNY}). It is convenient to use four $\tau$ functions $\mathcal{T}_1, \mathcal{T}_2, \mathcal{T}_3, \mathcal{T}_4$. This result is given in Theorem \ref{Th:tauformal}. 

The next task is to give a formula for $\mathcal{T}_i$ as an analytical function on variable $Z \in \mathbb{C}$ (above one can think that $\mathcal{T}_i$ is defined on the lattice as in \cite{Tsuda2006}). As in continuous case the conditions on the $\tau$ functions  can be rewritten as a  system of equations on one function $\mathcal{T}(u,s;q|Z)$
\begin{equation}
\begin{aligned}
Z^{1/4}\mathcal{T}(u,s;q|qZ)\mathcal{T}(u,s;q|q^{-1}Z)&=\mathcal{T}(u,s;q|Z)^2+Z^{1/2}\mathcal{T}(uq,s;q|Z)\mathcal{T}(uq^{-1},s;q|Z)\\
\mathcal{T}(uq^2,s;q|Z)=s^{-1}\mathcal{T}(u,s;q|Z) . 
\end{aligned} \label{eq:desTau}
\end{equation}
Here $u, s$ are $q$-deformed analogs of parameters $\sigma,s$ in \eqref{GIL}.

In Section \ref{sec:deform} we propose the solution of this system in the ansatz
\begin{equation}
\mathcal{T}(u,s;q|Z)=\sum_{n \in \mathbb{Z}}s^n C(uq^{2n};q|Z) \frac{\mathcal{F}(uq^{2n};q^{-1},q|Z)}{(uq^{2n+1};q,q)_\infty(u^{-1}q^{-2n+1};q,q)_\infty}
\label{eq:T}
\end{equation} 
Here $\mathcal{F}(u;q^{-1},q|Z)$ denotes Whittaker limit of the $q$-deformed conformal block,
$(u;q,q)_\infty$ denotes double infinite Pochhammer symbol and the function $C(u;q|Z)$ is defined by its second difference derivatives, see Definition  \ref{Def:tau}. The function  $C(u;q|Z)$ is $q$-deformed analogue of $z^{\sigma^2}$ in formula \eqref{Pp}.
We give a couple of examples of such functions in Example \ref{Ex:C}.

In Subsection \ref{ssec:qdeftau} we discuss validity of \eqref{eq:desTau} for function
$\mathcal{T}(z)$ defined in \eqref{eq:T}. The second condition in \eqref{eq:desTau} is straightforward from \eqref{eq:T}. The bilinear relation in \eqref{eq:desTau} is equivalent to the bilinear
relation on function $\mathcal{F}(u;q^{-1},q|Z)$, see \eqref{bilconfrel}
and Theorem \ref{mainbilin}. We do not prove corresponding bilinear relation but give several arguments in support.

In order to finish description of the paper content note
that in Subsections \ref{ssec:contlimeq} and	 \ref{ssec:contlim} we discuss $q \rightarrow 1$ limit, in Subsection \ref{ssec:algsol} we discuss special values of $u,s$
which correspond to $q$-deformation of algebraic solution of \eqref{Pp}. In Appendix \ref{app:q} we collect necessary definitions and properties
of $q$-special functions. In Appendix \ref{app:toralg} we write more general bilinear relations on $q$-deformed conformal blocks.

\section{$q$-deformed equations}
\label{sec:repr}
\subsection{$A_7^{(1)\prime}$ surface and its symmetry group}
\label{ssec:Cremona}
Here we follow Sakai approach to the $q$-difference Painlev\'e equations associated with rational surfaces \cite{SakaiCMP}, \cite{KNY}.
In this approach $q$-deformed Painlev\'e equation arises from discrete group of birational automorphisms acting on a rational surface.

Let $X$ be a rational surface of the type $A_7^{(1)\prime}/A_1^{(1)}$ in Sakai notations \cite{SakaiCMP} where first notation denotes surface type 
and second notation denotes symmetry type of the surface.
It is obtained by blow-up of $\mathbb{CP}^2$ in nine points. Scheme of this procedure one can see in Fig.~\ref{blowup}
(see \cite[App. B, Mul.~9]{SakaiCMP}). 

\begin{figure}[h]
\begin{center}
\begin{tikzpicture}[x=2em,y=2em]
\draw (-1,-0.5) -- (0.5, 2) node[anchor=west] {$x=0$};
\draw (-1.4, -0.1) -- (2,-0.1) node[anchor=west] {$y=0$};
\draw (1.6,-0.5) -- (0.1,2) node[anchor=east] {$z=0$};
\path[fill=black] (-0.25,0.75) circle (0.7mm) node[anchor=east] {$p_4$};
\path[draw=black] (0.3,1.67) circle (1.5mm);
\draw (0.5,1.8) node[below right] {$p_7\leftarrow p_8 \leftarrow p_9$};
\path[fill=black] (1.36,-0.1) circle (0.7mm);
\path[draw=black] (1.36,-0.1) circle (1.5mm);
\draw (1.36,0) node[above right] {$p_2\leftarrow p_6$};
\path[fill=black] (-0.76,-0.1) circle (0.7mm);
\path[draw=black] (-0.76,-0.1) circle (1.5mm);
\draw (-1,-0.4) node[anchor=east] {$p_5\rightarrow p_3\rightarrow p_1$};
\end{tikzpicture}
\caption{\leftskip=1in \rightskip=1in Blow-up scheme for $X$}
	\label{blowup}
\end{center}
\end{figure}
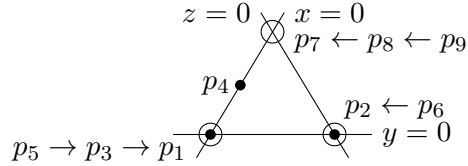

Let $(x:y:z)$ denote coordinates of $\mathbb{CP}^2$, $(a,b,c)$ are parameters of surface which characterize positions of points of blow-up
(our $a$ is $a_1$ in \cite{SakaiCMP}). There is an equivalence $(a,b,c;x:y:z)\sim (a,\mu^2 b,\mu^{-2}c;\mu x:y:z)$.
There exists discrete subgroup of Cremona group (group of rational transformations of $\mathbb{CP}^2$) which preserve blow-up structure.
We will denote it by $W$. Group $W$ is equal to $Dih_4\ltimes W(A^{(1)}_1)$ where $Dih_4$ is dihedral group of square and $W(A^{(1)}_1)$ is Weyl group of
$\widehat{sl}(2)$. Group $W$ can be presented by generators $s_0, s_1, \pi_1, \pi_2$ and relations as
\begin{equation} \label{eq:Crem:rel}
s_0^2=s_1^2=1,\quad \pi_1^2=\pi_2^4=(\pi_1\pi_2)^2=1,\quad s_1=\pi_2s_0\pi_2^{-1},\quad s_0=\pi_2^2s_0\pi_2^{-2}=\pi_1s_0\pi_1^{-1},
\end{equation}
where $s_0,s_1$ are Weyl group simple reflections, $\pi_1, \pi_2$ are generators of $Dih_4$.
Structure of $W$ could be graphically represented on Fig. \ref{dihedral}. First two pictures define action of the generators of dihedral group on the square
and the last picture describes structure of the semidirect product, namely $\pi_2$ acts as the outer automorphism which interchange roots.

\begin{figure}[h]
\begin{center}
\begin{tikzpicture}

\draw (6,1) node [anchor=east] {$\alpha_0$}
edge[pil,bend right=0] (5.999,1);
\draw (8,1) node [anchor=west] {$\alpha_1$}
edge[pil,bend right=0] (8.001,1); 
\draw (6.1,1.02) -- (7.9,1.02);
\draw (6.1,0.98) -- (7.9,0.98);
\draw (6.2,1.3)
edge[pil,<->,bend left=20] (7.8,1.3);
\draw (7,1.7) node {$\pi_2$};

\draw (-4,0) node [below left]  {$4$} -- 
        (-2,0) node [below right] {$3$} -- 
        (-2,2) node [above right] {$2$} -- 
        (-4,2) node [above left]  {$1$} -- 
        cycle;
        \draw[dashed] (-4.4, 2.4) -- (-1.2,-0.8) node[below right] {$\pi_1$};
        \draw (-1.7,-0.8)
        edge[pil,<-> ,bend right=45] (-1.25,-0.2);
        

        \draw (2,0) node [below left]  {$4$} -- 
        (4,0) node [below right] {$3$} -- 
        (4,2) node [above right] {$2$} -- 
        (2,2) node [above left]  {$1$} -- 
        cycle;
        
        \draw (3,1) node {$\pi_2$};
        \draw (4.3,0.2)
        edge[pil, bend right=20] (4.3,1.8);

        \draw (3.8,2.3)
        edge[pil, bend right=20] (2.2,2.3);

        \draw (1.7,1.8)
        edge[pil, bend right=20] (1.7,0.2);

        \draw (2.2,-0.3)
        edge[pil, bend right=20] (3.8,-0.3);
\end{tikzpicture}
\caption{\leftskip=1in \rightskip=1in Action of $Dih_4$ generators}
	\label{dihedral}
\end{center}
\end{figure}
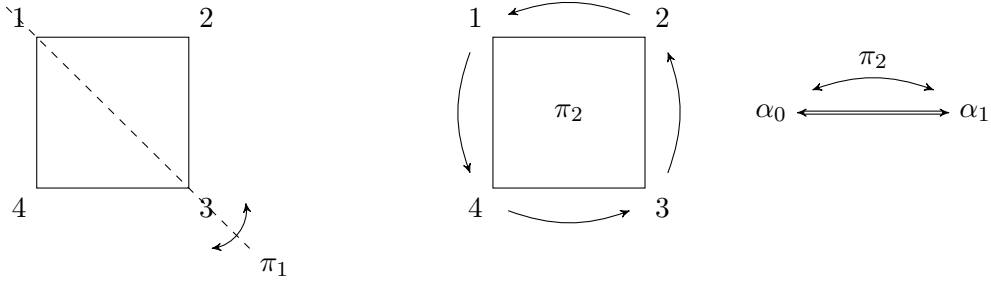

As a subgroup of Cremona group $W$ acts on rational surfaces by rational transformations\footnote{there is a misprint in formula for $s_0$
in \cite[App. C, Mul 9]{SakaiCMP}}
\begin{equation}
\begin{aligned}
\pi_1&\colon (a,b,c;x:y:z) \mapsto (1/a,1/c,1/b;x:z:y)
\\ 
\pi_2&\colon (a,b,c;x:y:z) \mapsto (bc,a/b,b;xy:byz:x^2)\\\
s_1&\colon (a,b,c;x:y:z) \mapsto (1/a,ab,ac;x(y-az):y(z-y):az(z-y))\\
s_0&\colon (a,b,c;x:y:z) \mapsto (ab^2c^2,1/c,1/b;x(x^2/b-yz)(yz-cx^2):y(yz-cx^2)^2:z(x^2/b-yz)^2)\label{Cract}
\end{aligned}
\end{equation}
 
Let us denote $q=abc$, $Z=a^{-1}$ and $F=yzb/x^2$, $G=z/y$. These coordinates are unambiguously defined (taking into account equivalence of $\mu$
rescaling). Then, in terms of $(Z,q,F,G)$ action of group $W$ has the form:
\begin{equation}\label{eq:WZqFG}
\begin{aligned}
\pi_1&\colon (Z,q,F,G) \mapsto (Z^{-1},q^{-1},q^{-1}Z^{-1}F,G^{-1}),
\\ 
\pi_2&\colon (Z,q,F,G) \mapsto (Z^{-1}q^{-1},q,Z^{-1} G,F^{-1}),\\\
s_1&\colon (Z,q,F,G) \mapsto (Z^{-1},q,F\frac{(G-1)^2}{(G-Z)^2},Z^{-1}G),\\
s_0&\colon (Z,q,F,G) \mapsto (Z^{-1}q^{-2},q,q^{-1}Z^{-1} F,G\frac{(1-F)^2}{(Z q-F)^2}).
\end{aligned}
\end{equation}
Introduce element $T=\pi_2^{-1}\circ s_0$ of infinite order in $W$. We will denote $\overline{x}=T(x)$ and $\underline{x}=T^{-1}(x)$.
Note that group $W$ is generated by $T,s_1$ and $\pi_1$.
We have 
\begin{equation}
(\overline{Z},\overline{q},\overline{F},\overline{G})=\left(q Z,q,\frac{(F-q Z)^2}{(F-1)^2 G},F\right),\qquad (\underline{Z},\underline{q},
\underline{F},\underline{G})=\left(q^{-1}Z,q,G,\frac{(G-Z)^2}{F (G-1)^2}\right).\ny
\end{equation}
Therefore
\begin{equation}
\overline{G}\underline{G}=\left(\frac{G-Z}{G-1}\right)^2 \label{qPp}.
\end{equation}
This equation could be called $q$-deformed Painlev\'e III($D_8$).
That is because if we consider $G$ as a function on $Z$ (and $\overline{G}=G(qZ), \, \underline{G}=G(q^{-1}Z)$) then
continuous limit of \eqref{qPp} is Painlev\'e III($D_8$) equation \eqref{Pp} (see Subsection \ref{ssec:contlimeq}). Of course this equation is not new, here we get it from the known formulas for $A_7^{(1)\prime}$ surface, but exactly in form \eqref{qPp} it was written in \cite[eq. (20)]{Ramani:2015}, see also earlier work \cite{Ramani:2000}.

\subsection{$\tau$ function representation of group $W$}
\label{ssec:repr}
We want to lift representation of group $W$ to the level of the $4$ letters $\mathcal{T}_i,\, i=1,\ldots 4$.
Analogously to \cite{Tsuda2006}(see also \cite{KNY}) we call $\mathcal{T}_i$ $\tau$ functions and state Theorem

\begin{thm}\label{Th:tauformal}
Action of generators $s_1, \pi_1, \pi_2$ of group $W$ on $\mathcal{T}_i,\, i=\overline{1,4}$ given by Table \ref{lettertable} 
provides a representation of $W$ in the field $\mathbb{C}(\mathcal{T}_1, \mathcal{T}_2, \mathcal{T}_3, \mathcal{T}_4, q^{1/4}, Z^{1/4})$.
\end{thm}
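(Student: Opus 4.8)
The plan is to verify directly that the assignments in Table~\ref{lettertable} satisfy all the defining relations \eqref{eq:Crem:rel} of $W$, after which the universal property of a group presentation guarantees that we obtain a well-defined representation on the specified field. First I would record, for each generator $g\in\{s_1,\pi_1,\pi_2\}$, the induced field automorphism of $\mathbb{C}(\mathcal{T}_1,\dots,\mathcal{T}_4,q^{1/4},Z^{1/4})$: the action on the $\mathcal{T}_i$ is read off from the table, and the action on $q^{1/4},Z^{1/4}$ is the unique square-root-compatible lift of the action on $q,Z$ already fixed by \eqref{eq:WZqFG} (one must check en route that these lifts are consistent, i.e.\ that a fourth root of $q$ and of $Z$ is genuinely needed and that no sign ambiguity obstructs closure of the relations; this is why the field is taken with $q^{1/4},Z^{1/4}$ rather than $q,Z$). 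Then I would check that $s_1$ acts as an involution, that $\pi_1$ is an involution, that $\pi_2$ has order $4$, and that $(\pi_1\pi_2)^2=1$. For the relations mixing $Dih_4$ and the Weyl part — $s_1=\pi_2 s_0\pi_2^{-1}$, $s_0=\pi_2^2 s_0\pi_2^{-2}=\pi_1 s_0\pi_1^{-1}$ and $s_0^2=1$ — note that $s_0$ itself is not a generator in our chosen list, so I would first \emph{define} the action of $s_0$ by $s_0:=\pi_2^{-1}s_1\pi_2$ and then verify that the resulting automorphism is an involution and commutes appropriately with $\pi_1$ and with $\pi_2^2$; equivalently, one checks $\pi_2^2 s_1\pi_2^{-2}=\pi_2^{-1}s_1\pi_2$ composed suitably, together with $\pi_1\pi_2^{-1}s_1\pi_2\pi_1^{-1}=\pi_2^{-1}s_1\pi_2$.

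Concretely, each check reduces to a finite substitution: apply the candidate word in the generators to each of the six coordinates $\mathcal{T}_1,\mathcal{T}_2,\mathcal{T}_3,\mathcal{T}_4,q^{1/4},Z^{1/4}$ and confirm the composite map is the identity (or the claimed generator). Since the table entries for the $\mathcal{T}_i$ are rational — indeed, one expects them to be monomial or low-degree rational expressions in the $\mathcal{T}_j$ with coefficients monomial in $q^{1/4},Z^{1/4}$, mirroring the structure of \eqref{eq:WZqFG} — these are routine but slightly lengthy algebraic manipulations. A useful bookkeeping device is to track separately (i) the permutation/rescaling pattern on the index set $\{1,2,3,4\}$, which should match the $Dih_4$-action on the square in Fig.~\ref{dihedral}, and (ii) the accompanying monomial prefactors in $q^{1/4},Z^{1/4}$, whose consistency is exactly the content of the fourth-root relations.

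The main obstacle I anticipate is not any single relation but the sign/root bookkeeping in the prefactors: the relations $\pi_2^4=1$ and $s_0^2=1$ typically produce, a priori, an overall factor that is a fourth root of unity times a monomial in $q,Z$, and one must see that the monomial part is trivial and the root of unity is $+1$. This forces the precise exponents of $q^{1/4}$ and $Z^{1/4}$ appearing in the table (and is the reason the ground field is enlarged by these roots in the first place); getting those exponents right is the crux. A secondary, purely organizational point is that because $s_0$ does not appear among the chosen generators, one must be careful to check the three relations involving $s_0$ in the form obtained after substituting $s_0=\pi_2^{-1}s_1\pi_2$, rather than treating $s_0$ as independent — otherwise the count of relations to verify is needlessly inflated. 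Once all relations in \eqref{eq:Crem:rel} are confirmed for the field automorphisms, the theorem follows immediately, and as a consistency sanity check one verifies that forgetting the $\mathcal{T}_i$ and retaining only the induced action on $(Z,q,F,G)$ — where $F,G$ are to be expressed as the appropriate ratios of the $\mathcal{T}_i$ — reproduces \eqref{eq:WZqFG}.
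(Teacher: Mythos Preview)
Your approach is correct and essentially identical to the paper's: the authors also declare the proof ``straightforward'' and reduce it to verifying the defining relations of $W$ on the chosen generators $s_1,\pi_1,\pi_2$. The only minor difference is that the paper lists the relations to be checked entirely in terms of $s_1$ --- namely $s_1^2=1$, $\pi_1^2=\pi_2^4=(\pi_1\pi_2)^2=1$, and $s_1=\pi_1 s_1\pi_1^{-1}=\pi_2^2 s_1\pi_2^{-2}$ --- rather than introducing $s_0:=\pi_2^{-1}s_1\pi_2$ as an auxiliary; this is equivalent to (and slightly shorter than) your route via $s_0$, but the content is the same direct verification.
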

\begin{table}
\scalebox{0.95}{\begin{tabular}{|p{0.2cm}|c|p{0.4cm}|c|c|c|c|c|c|}
\hline
&$Z$ & $q$&$F$ & $G$ &$\mathcal{T}_1$& $\mathcal{T}_2$&$\mathcal{T}_3$&$\mathcal{T}_4$\\
\hline
$\pi_1$&$1/Z$ & $1/q$&$F/(qZ)$&$1/G$ &$\mathcal{T}_3$& $\mathcal{T}_2$&$\mathcal{T}_1$&$\mathcal{T}_4$\\
\hline
$\pi_2$& $1/(qZ)$&$q$&$ G/Z$&$1/F$&$\mathcal{T}_4$&$\mathcal{T}_1$&$\mathcal{T}_2$&$\mathcal{T}_3$\\
\hline
$s_1$&$1/Z$&$q$&$F\frac{(G-1)^2}{(G-Z)^2}$&$G/Z$&$\mathcal{T}_1$&$\dfrac{\mathcal{T}_3^2{+}Z^{1/2}\mathcal{T}_1^2 }{Z^{1/4}\mathcal{T}_4 }$&$\mathcal{T}_3$&$\dfrac{\mathcal{T}_1^2{+}Z^{1/2}\mathcal{T}_3^2 }{Z^{1/4}\mathcal{T}_2}$\\[10pt]
\hline
$s_0$&$1/(q^2Z)$&$q$&$F/(qZ)$&$G\frac{(1-F)^2}{(Z q-F)^2}$&$\dfrac{\mathcal{T}_4^2{+}(qZ)^{1/2}\mathcal{T}_2^2}{ (qZ)^{1/4} \mathcal{T}_3}$&$\mathcal{T}_2$&$\dfrac{\mathcal{T}_2^2{+}(qZ)^{1/2}\mathcal{T}_4^2}{ (qZ)^{1/4} \mathcal{T}_1}$&$\mathcal{T}_4$\\[10pt]
\hline
$T$&$qZ$&$q$&$\frac{(F-q Z)^2}{(-1+F)^2 G}$&$F$&$\mathcal{T}_2$&$\dfrac{\mathcal{T}_2^2{+}(q Z)^{1/2}\mathcal{T}_4^2}{(q Z)^{1/4}\mathcal{T}_1 }$&$\mathcal{T}_4$&$\dfrac{\mathcal{T}_4^2{+}(q Z)^{1/2}\mathcal{T}_2^2 }{(q Z)^{1/4}\mathcal{T}_3}$\\[10pt]
\hline
\end{tabular}}
\caption{Representation of $W$ on $\mathcal{T}_i$}
\label{lettertable}
\end{table}
Proof is straightforward. Generators $s_1, \pi_1, \pi_2$ generate the whole group. So it is enough to check relations of group
$s_1^2=1, \pi_1^2=\pi_2^4=(\pi_1\pi_2)^2=1, s_1=\pi_1s_1\pi_1^{-1}=\pi_2^2s_1\pi_2^{-2}$ for given actions.

Subgroup $Dih_4$ of $W$ acts on $\tau$ functions the same as it acts on square vertices on Fig. \ref{dihedral} i.e.  $w(\mathcal{T}_i)=\mathcal{T}_{w(i)}$ for any $w \in Dih_4$. In more geometrical language one can assign 
$\tau$ functions to the four blow-up points in $\mathbb{CP}^2$ (cf. \cite{Tsuda2006}). 

Define $F,G$ by the formulas
\begin{equation}
F=-(qZ)^{1/2}\frac{\mathcal{T}_2^2}{\mathcal{T}_4^2}, \qquad G=-Z^{1/2}\frac{\mathcal{T}_1^2}{\mathcal{T}_3^2}.\label{FGdefinition}
\end{equation}
This is just an discrete analog of \eqref{tautau1}.
Then one can check that action of $W$ on $F$ and $G$ induced by action on $\tau$ functions coincides with the action defined in  \eqref{eq:WZqFG}.

The formula for action $T$ implies that $\mathcal{T}_2=\overline{\mathcal{T}_1}$ and $\mathcal{T}_4=\overline{\mathcal{T}_3}$. Then from the action of $T$ of $\mathcal{T}_2$ and $\mathcal{T}_4$ we get
\begin{equation}\label{eq:Tau13}
Z^{1/4}\overline{\mathcal{T}_1}\underline{\mathcal{T}_1}=\mathcal{T}_1^2+Z^{1/2}\mathcal{T}_3^2,\qquad Z^{1/4}\overline{\mathcal{T}_3}\underline{\mathcal{T}_3}=\mathcal{T}_3^2+Z^{1/2}\mathcal{T}_1^2.
\end{equation}

\begin{Remark}
   It is interesting to note that under the action of group $W$ maps $\mathcal{T}_i$ to the Laurent polynomials on $\mathcal{T}_i$ non just rational functions. This property is nontrivial, for example  this is not true for $G$ and $F$. This observation follows from the fact that action of $W$ can be represented as composition of mutations for cluster algebra\footnote{we are grateful to P. Gavrylenko for the discussion of this point} similar to the higher $q$-deformed Painlev\'e equation in the paper~\cite{Okubo}. 	
\end{Remark}

\begin{Remark} Equations corresponding the $A_7^{(1)\prime}$ surface are usually written in bit more general form then \eqref{qPp}. For example in \cite[eq. (10)]{Ramani:2015}, (see also \cite[eq.(2.19))]{Grammaticos:2002}): 
\begin{equation}\label{eq:A7'GR}
(\overline{x}y-1)(xy-1)=Zy^2,\quad(xy-1)(x\underline{y}-1)=Z.		
\end{equation}
This system can be solved in terms of $\tau$ functions subject of \eqref{eq:Tau13}. Namely 
\begin{equation*}
x=q^{-1/4}Z^{1/4}\frac{\underline{\mathcal{T}_3}\overline{\mathcal{T}_1}}{\mathcal{T}_3\mathcal{T}_1},\quad y=q^{1/4}\frac{\overline{\mathcal{T}_3}{\mathcal{T}_1}}{\mathcal{T}_3\overline{\mathcal{T}_1}}.
\end{equation*}
Therefore one can consider system \eqref{eq:Tau13} as a bilinear form of \eqref{eq:A7'GR}. In the paper \cite[eq. (2.23)-(2.24)]{SakaiLax} this system is written in a different form 
\begin{equation}\label{eq:A7'Sakai}
f\overline{f}={g(Z-g)}/{(g-1)},\quad
g\overline{g}=\overline{f}^2.
\end{equation}
This system can be also solved in in terms of $\tau$ functions and in fact equvalent to \eqref{eq:A7'GR} by the invertible transformation 
\begin{equation*}
 g=1-xy, \quad f=-q^{1/2}\underline{g}/\underline{y}.
\end{equation*}
\end{Remark}

We want to consider $F,G, \mathcal{T}_i$ as functions on $Z$ (and $q$), i.e. $\overline{H}=H(qZ)$, $\underline{H}=H(q^{-1}Z)$ where $H=F, G, \mathcal{T}_i$. The equations \eqref{eq:Tau13} become $q$-difference bilinear equations on $\mathcal{T}_1$ and $\mathcal{T}_3$. In the next section we give a solution of these equations in terms of one function $\mathcal{T}(u,s;q|Z)$ depending on two parameters $u,s$. Below we will use parameter $\sigma$, such that $u=q^{2\sigma}$, the parameters $\sigma, s$ are direct analogues of $\sigma,s$ for continuous Painlev\'e in the formula \eqref{GIL}. Geometrically $s,u$ can be viewed as parametrization of the (open subset of) surface $X$. We set 
\begin{equation}\label{eq:T1T3}
\mathcal{T}_1=\mathcal{T}(u,s;q|Z), \quad 
\mathcal{T}_3=s^{1/2}\mathcal{T}(uq,s;q|Z),
\end{equation}
and assume quasi periodicity in $u$
\begin{equation}
\mathcal{T}(uq^2,s;q|Z)=s^{-1}\mathcal{T}(u,s;q|Z).
\end{equation}
Then the equations \eqref{eq:Tau13} reduces to 
\begin{equation}
Z^{1/4}\mathcal{T}(u,s;q|qZ)\mathcal{T}(u,s;q|q^{-1}Z)=\mathcal{T}(u,s;q|Z)^2+Z^{1/2}\mathcal{T}(uq,s;q|Z)\mathcal{T}(uq^{-1},s;q|Z)\label{TodaT}.
\end{equation}

The functions $F(Z), G(Z)$ are defined in terms of function $\mathcal{T}$ by formula \eqref{FGdefinition}. Note that solutions of equations \eqref{eq:Tau13} depend on four parameters, in addition of $u, s$ we have simple symmetry $(\mathcal{T}_1,\mathcal{T}_3) \mapsto (\alpha Z^{\beta}\mathcal{T}_1,\alpha Z^{\beta}\mathcal{T}_3)$. This symmetry do not acts on the functions $F(Z), G(Z)$.

Remark that element $\pi_2^2 \in W$ commutes with shift $T$, in the continuous limit below this transformation goes to the B\"acklund transformation. It acts by formula $\pi_2^2 \colon \mathcal{T}(u,s;q|Z) \mapsto \mathcal{T}(uq,s;q|Z)$. In terms of parameter $\sigma$ this is transformation $\sigma \mapsto \sigma+1/2$ as for continuous Painlev\'e.

\subsection{Continuous limit of the equations}
\label{ssec:contlimeq}
Let us check that continuous limit of \eqref{qPp} is \eqref{Pp} as it should be 
for $q$-deformed Painlev\'e III($D_8$).
Introduce notations $q=e^{\hbar}, \, u=q^{2\sg}$.
In the limit questions we assume that $|q|<1$ $q \rightarrow 1$.

\begin{prop} \label{Geqlim}
Substitute
\begin{equation}
Z=\hbar^4 z, \quad
G(Z)=\hbar^2 w(z).
\end{equation}
If $\hbar\rightarrow 0$ the leading order of the equation \eqref{qPp} is Painlev\'e III($D_8$) \eqref{Pp} equation.
\end{prop}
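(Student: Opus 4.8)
The plan is to perform a direct asymptotic expansion of the $q$-difference equation \eqref{qPp} in the parameter $\hbar$. First I would substitute $Z=\hbar^4 z$ and $G(Z)=\hbar^2 w(z)$ and expand each ingredient of \eqref{qPp} to the needed order. The shifted arguments become $qZ=e^{\hbar}\hbar^4 z$ and $q^{-1}Z=e^{-\hbar}\hbar^4 z$, so in the new variable $\overline{G}=\hbar^2 w(e^{\hbar}z)$ and $\underline{G}=\hbar^2 w(e^{-\hbar}z)$. Since the shift in $z$ is multiplicative by $e^{\pm\hbar}$, the natural expansion variable on the left-hand side is $\log z$: write $w(e^{\pm\hbar}z)=w(z)\pm\hbar\,z w'(z)+\tfrac{\hbar^2}{2}(z w'(z))' z+O(\hbar^3)$, or more cleanly use $\theta=\log z$ so that $\overline{G}\underline{G}=\hbar^4\,w(e^{\hbar}z)w(e^{-\hbar}z)=\hbar^4\bigl(w^2+\hbar^2(w\,\partial_\theta^2 w-(\partial_\theta w)^2)+O(\hbar^4)\bigr)$, the $O(\hbar)$ term cancelling by parity.

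Next I would expand the right-hand side. We have $G-Z=\hbar^2 w-\hbar^4 z=\hbar^2 w(1-\hbar^2 z/w)$ and $G-1=\hbar^2 w-1=-1+\hbar^2 w$, hence
\begin{equation}
\left(\frac{G-Z}{G-1}\right)^2=\frac{\hbar^4 w^2\,(1-\hbar^2 z/w)^2}{(1-\hbar^2 w)^2}=\hbar^4 w^2\bigl(1-2\hbar^2 z/w+2\hbar^2 w+O(\hbar^4)\bigr).
\end{equation}
Equating with the left-hand side and dividing by $\hbar^4$ gives, at order $\hbar^0$, the trivial identity $w^2=w^2$, so the content is at the next order: at order $\hbar^2$,
\begin{equation}
w\,\partial_\theta^2 w-(\partial_\theta w)^2=-2zw+2w^2,
\end{equation}
where $\partial_\theta=z\,\partial_z$. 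Rewriting $\partial_\theta^2 w=z(zw')'=z^2 w''+z w'$ and $\partial_\theta w=z w'$, this becomes $w(z^2 w''+z w')-z^2 w'^2=-2zw+2w^2$, i.e. after dividing by $z^2 w$ one recovers exactly \eqref{Pp}: $w''=w'^2/w-w'/z+2w^2/z^2-2/z$.

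The only genuinely delicate point is bookkeeping of orders: one must be sure that the $\hbar^2$ correction to $\overline{G}\underline{G}$ (which is quartic overall, $\hbar^6$ before dividing) and the $\hbar^2$ correction to the right-hand side are the leading nontrivial contributions, and that no lower-order term was overlooked — in particular the cancellation of the linear-in-$\hbar$ terms on both sides, which holds by the reflection symmetry $\hbar\mapsto-\hbar$ of the symmetric products $w(e^{\hbar}z)w(e^{-\hbar}z)$ and $(1-\hbar^2 z/w)^2(1-\hbar^2 w)^{-2}$. I would also note that the substitution is consistent with the scaling equivalence of the surface (so the choice of powers $\hbar^4$ and $\hbar^2$ is the natural one matching $z^{1/2}$ terms in \eqref{Pp}), and that the assumption $|q|<1$, $q\to1$ guarantees $\hbar\to0$ along a ray where the expansion is legitimate. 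Everything else is the routine Taylor expansion sketched above.
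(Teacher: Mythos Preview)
Your approach is exactly the paper's: a direct Taylor expansion of \eqref{qPp} in $\hbar$, with the first nontrivial order $\hbar^2$ giving \eqref{Pp}. One small arithmetic slip to fix: after multiplying by $w^2$ the right-hand side correction is $-2zw+2w^{3}$ (not $2w^{2}$), which upon dividing by $z^{2}w$ then genuinely reproduces the $2w^{2}/z^{2}$ term of \eqref{Pp}.
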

Proof is by direct calculation, namely expansion of \eqref{qPp} into powers of $\hbar$. We obtain that first non-trivial coefficient with power $\hbar^2$. This coefficient vanishes and this is just Painlev\'e III($D_8$)~\eqref{Pp}.

We also check the continuous limit of bilinear equation \eqref{TodaT}.
It is convenient to define new function $\mathcal{T}_c(u,s;q|Z)$ by
\begin{equation}
\mathcal{T}_c(u,s;q|Z)=\frac{(q;q,q)^2_{\infty}}{\Gamma(-(qZ)^{1/4};q^{1/4},q^{1/4})}\mathcal{T}(u,s;q|Z)\label{redqToda}.
\end{equation}
Equation \eqref{TodaT} in terms of function $T_c(u,s;q|Z)$ reads
\begin{equation*}
\mathcal{T}_c(u,s;q|qZ)\mathcal{T}_c(u,s;q|q^{-1}Z)=\mathcal{T}_c(u,s;q|Z)^2+Z^{1/2}\mathcal{T}_c(uq,s;q|Z)\mathcal{T}_c(uq^{-1},s;q|Z).
\end{equation*}
Taking analytic continuation around $Z=0$ one can change the sign of $Z^{1/2}$ and get 
\begin{equation}\label{eq:Todac}
\mathcal{T}_c(u,s;q|qZ)\mathcal{T}_c(u,s;q|q^{-1}Z)=\mathcal{T}_c(u,s;q|Z)^2-Z^{1/2}\mathcal{T}_c(uq,s;q|Z)\mathcal{T}_c(uq^{-1},s;q|Z).
\end{equation}
\begin{prop}\label{contlimeq}
Substitute
\begin{equation}
Z=\hbar^4 z, \quad \mathcal{T}_c(u,s;q|Z)=\tau(\sg,s|z).
\end{equation}
If $\hbar\rightarrow 0$ then the leading order of the equation \eqref{eq:Todac} is Toda-like equation~\eqref{sgToda}.
\end{prop}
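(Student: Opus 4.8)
The plan is to verify Proposition \ref{contlimeq} by a direct asymptotic expansion of the bilinear equation \eqref{eq:Todac} in powers of $\hbar$, mirroring the computation already used for Proposition \ref{Geqlim}. First I would make the substitution $Z=\hbar^4 z$ and $\mathcal{T}_c(u,s;q|Z)=\tau(\sg,s|z)$ consistently in every factor. The key observation is that $qZ=\hbar^4 e^{\hbar} z$ and $q^{-1}Z=\hbar^4 e^{-\hbar}z$, so $\mathcal{T}_c(u,s;q|qZ)$ becomes $\tau(\sg,s|e^{\hbar}z)$ and $\mathcal{T}_c(u,s;q|q^{-1}Z)$ becomes $\tau(\sg,s|e^{-\hbar}z)$; these are ordinary shifts of the continuous argument by $\pm\hbar$ in the $\log z$ variable. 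Likewise, on the right-hand side $uq=q^{2\sg+1}$ corresponds to $\sg\mapsto\sg+1/2$ and $uq^{-1}$ to $\sg\mapsto\sg-1/2$, so $Z^{1/2}\mathcal{T}_c(uq,s;q|Z)\mathcal{T}_c(uq^{-1},s;q|Z)$ becomes $\hbar^2 z^{1/2}\tau(\sg+1/2,s|z)\tau(\sg-1/2,s|z)$, picking up the prefactor $Z^{1/2}=\hbar^2 z^{1/2}$.

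Next I would Taylor-expand the left-hand side. Writing $\partial = \partial_{\log z}$, we have
\begin{equation*}
\tau(\sg,s|e^{\pm\hbar}z) = e^{\pm\hbar\,\partial}\tau(\sg,s|z) = \tau \pm \hbar\,\partial\tau + \tfrac{\hbar^2}{2}\partial^2\tau + O(\hbar^3),
\end{equation*}
so the product $\tau(\sg,s|e^{\hbar}z)\tau(\sg,s|e^{-\hbar}z)$ equals $\tau^2 + \hbar^2\big(\tau\,\partial^2\tau - (\partial\tau)^2\big) + O(\hbar^4)$; the odd powers cancel. Subtracting $\tau^2$ from both sides of \eqref{eq:Todac}, the surviving terms at order $\hbar^2$ read
\begin{equation*}
\hbar^2\big(\tau\,\partial^2\tau - (\partial\tau)^2\big) = -\hbar^2 z^{1/2}\,\tau(\sg+1/2,s|z)\tau(\sg-1/2,s|z) + O(\hbar^3).
\end{equation*}
Since $\tfrac12 D^2_{[\log z]}(\tau,\tau) = \tau\,\partial^2\tau - (\partial\tau)^2$ by definition of the Hirota operator, dividing by $\hbar^2$ and sending $\hbar\to 0$ gives exactly \eqref{sgToda}. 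One small point to address is the replacement $\mathcal{T}_c\to\tau$: the definition \eqref{redqToda} differs from $\mathcal{T}$ by a $Z$-dependent prefactor involving $\Gamma(-(qZ)^{1/4};q^{1/4},q^{1/4})$ and $(q;q,q)_\infty^2$, and one should note that this prefactor is precisely the combination needed to turn \eqref{TodaT} into the normalized form \eqref{eq:Todac} without a $Z^{1/4}$ coefficient; then it is $\mathcal{T}_c$ (not $\mathcal{T}$) that has the clean continuous limit $\tau(\sg,s|z)$, and the $C$-function normalization in \eqref{eq:T} should be arranged so that this holds — this consistency is discussed in Subsection \ref{ssec:contlim}.

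The computation is essentially routine, and the only genuine subtlety — the main obstacle — is bookkeeping the $q\to 1$ behaviour of the double Pochhammer / triple-gamma prefactor in \eqref{redqToda} and confirming that it does not contribute at orders $\hbar^0$ through $\hbar^2$ beyond what has already been absorbed into the passage from \eqref{TodaT} to \eqref{eq:Todac}. Concretely, one should check that the ratio $\mathcal{T}_c(u,s;q|qZ)\mathcal{T}_c(u,s;q|q^{-1}Z)/\mathcal{T}_c(u,s;q|Z)^2$ has no $\hbar$-dependence hidden in the normalization that would shift the $\hbar^2$ coefficient; since the prefactor in \eqref{redqToda} depends on $Z$ only through $(qZ)^{1/4}$, its contribution to the left side is $\mathcal{N}(qZ)\mathcal{N}(q^{-1}Z)/\mathcal{N}(Z)^2$ with $\mathcal{N}(Z)=(q;q,q)^2_\infty/\Gamma(-(qZ)^{1/4};q^{1/4},q^{1/4})$, and one verifies this tends to $1$ fast enough (the $q$-independent logarithmic derivative of $\log\mathcal{N}$ with respect to $\log Z$ contributes at $O(\hbar^2)$ a term proportional to $\partial^2_{\log Z}\log\mathcal{N}$, which must be shown to vanish or to be consistent with \eqref{sgToda}). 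Once this is pinned down, matching the $\hbar^2$ coefficients yields the Toda-like equation \eqref{sgToda} and completes the proof.
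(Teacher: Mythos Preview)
Your proposal is correct and follows exactly the approach the paper indicates: a direct expansion of \eqref{eq:Todac} in powers of $\hbar$, with the first nontrivial contribution appearing at order $\hbar^2$ and reproducing \eqref{sgToda}. Your last paragraph about the prefactor $\mathcal{N}$ is unnecessary here, since the proposition concerns the formal limit of the equation \eqref{eq:Todac} (already written in terms of $\mathcal{T}_c$) and not the convergence $\mathcal{T}_c\to\tau$, which is the separate content of Theorem~\ref{thmlim}.
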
 
Proof is by direct calculation, the first nontrivial coefficient appears in order $\hbar^2$.

\section{Formula for $\tau$ functions}
\label{sec:deform}
\subsection{$q$-deformed conformal blocks}
\label{ssec:qdefblocks}
In the representation theory approach irregular conformal block for $q$-deformed Virasoro algebra is defined as the square of Whittaker vector for this algebra (\cite{AY}).
This conformal block equals Nekrasov instanton partition function for 5d pure gauge $SU(2)$ theory \cite{NO} by proposed in \cite{AY} extension of the AGT conjecture. So the formula for the conformal block reads 
\begin{equation}
\calF(u_1,u_2;q_1,q_2|Z)=\sum_{\lambda_1,\lambda_2} Z^{|\lambda_1|+|\lambda_2|}\frac{1}{\prod_{i,j=1}^2 N_{\lambda_i,\lambda_j}(u_i/u_j;q_1,q_2)},
\label{confblock}
\end{equation}
where
\begin{equation}
N_{\lambda,\mu}(u,q_1,q_2)
=\prod_{s\in \lambda}
(1-u q_2^{-a_\mu(s)-1}q_1^{\ell_\lambda(s)}) \cdot 
\prod_{s \in \mu}
(1-u q_2^{a_\lambda(s)}q_1^{-\ell_\mu(s)-1}).\label{Nlm}
\end{equation}
The sum in \eqref{confblock} runs over all pairs of Young diagrams $\lambda_1, \lambda_2$. In formula \eqref{Nlm} $a_{\lmb}(s), l_{\lmb}(s)$ are lenghts of arms and legs 
of box $s$ in diagram $\lmb$.

The function $\mathcal{F}(Z)$ depends on $u_1,u_2$ through their ratio $u=u_1/u_2$, so we shall below use notation~$u$. Parameter $u$ is $q$-deformed analog of the highest weight and pair
$(q_1,q_2)$ is $q$-deformed analog of the central charge. In this paper we everywhere (except Appendix \ref{app:toralg}) use specification $q_1^{-1}=q_2=q$. In continuous limit this correspond to irregular conformal block for Virasoro algebras with $c=1$, as in formula for continuous $\tau$ function~\eqref{GIL}.

It follows directly from the definition that 
\begin{equation}
\mathcal{F}(u; q^{-1}, q|Z)=\mathcal{F}(u; q, q^{-1}|Z)=\mathcal{F}(u^{-1}; q^{-1}, q|Z). \label{blocksim}
\end{equation}

The function $\mathcal{F}$ is defined as a power series $\mathcal{F}(Z)=1+O(Z)$. This series converges, the proof of following proposition is similar to the one in \cite[Prop 1 (i)]{Its:2014}.

\begin{prop}\label{prop:convF}
	Let $|q|\neq 1$ and $u \neq q^n$, $n \in \mathbb{Z}$. Then series \eqref{confblock} converges uniformly and absolutely on every 	bounded subset of $\mathbb{C}$.
\end{prop}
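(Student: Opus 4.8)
The plan is to bound the coefficient of $Z^{k}$ in \eqref{confblock} by something summable, so that the series defines an entire function. Writing the coefficient of $Z^{N}$ as a sum over pairs $(\lambda_1,\lambda_2)$ with $|\lambda_1|+|\lambda_2|=N$ of $\prod_{i,j}N_{\lambda_i,\lambda_j}(u_i/u_j;q^{-1},q)^{-1}$, I would first get a lower bound on the absolute value of each factor $|1-u q_2^{a}q_1^{b}|$ appearing in $N_{\lambda,\mu}$. Since $q_1^{-1}=q_2=q$ with $|q|\neq 1$, each such factor is $|1-u^{\pm1}q^{m}|$ for some integer $m$ whose size is controlled by the hook data of boxes in diagrams of total size $N$, hence $|m|\le C N$ for an absolute constant. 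The hypothesis $u\neq q^{n}$ guarantees none of these factors vanishes; combined with $|q|\ne 1$, one gets $\inf_{m\in\Z}|1-u q^{m}|>0$ only in a weak sense, so more carefully: for $|q|<1$ the product $\prod_{m\ge 0}(1-uq^{m})$ and $\prod_{m\ge1}(1-u^{-1}q^{m})$ converge, while the ``bad'' direction $|1-uq^{-m}|\sim |u||q|^{-m}$ grows, which only helps. The upshot is a bound of the form $|N_{\lambda,\mu}(u^{\pm1};q^{-1},q)|\ge c^{N}$ for some $c=c(u,q)>0$, uniformly over diagrams with $|\lambda|,|\mu|\le N$; equivalently the reciprocal is $\le c^{-N}$ per factor, and there are $O(N)$ factors, giving a crude bound $\prod_{i,j}|N_{\lambda_i,\lambda_j}|^{-1}\le C_1^{N} N!^{\,C_2}$ or, done more carefully following \cite{Its:2014}, a genuinely geometric bound $\le C_1^{N}$.

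Second, I would count pairs of Young diagrams: the number of pairs with $|\lambda_1|+|\lambda_2|=N$ is $\sum_{j=0}^{N}p(j)p(N-j)$, which grows subexponentially (in fact $e^{O(\sqrt N)}$). Combining this with the per-term bound from the previous step, the coefficient of $Z^{N}$ is at most $C_1^{N}e^{O(\sqrt N)}$, hence the radius of convergence is positive; since the same estimate is uniform in $u$ on compact subsets of $\{u:u\ne q^{n}\}$ and the bound $C_1^{N}$ can actually be replaced by a subexponential bound (this is the content of the sharper analysis in \cite{Its:2014}, where the relevant products telescope and the arm/leg statistics are summed precisely rather than bounded termwise), the radius of convergence is in fact infinite, giving uniform and absolute convergence on every bounded subset of $\mathbb{C}$.

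The main obstacle is the sharpness in the first step: a naive termwise lower bound $|1-uq^{m}|\ge$ (something) times the number $O(N)$ of factors only yields $c^{-N}$ per pair, and then multiplying over the $O(1)$ choices of $(i,j)$ and summing over diagrams with the subexponential count still leaves a geometric series — enough for a positive radius but not obviously for an entire function. To upgrade to entirety one must exploit, as in \cite[Prop.\ 1(i)]{Its:2014}, that the factors are not arbitrary: for a fixed diagram $\lambda$ the multiset of hook-type exponents is highly structured, so the product $\prod_{s\in\lambda}|1-uq_2^{-a_\mu(s)-1}q_1^{\ell_\lambda(s)}|$ is bounded below by a convergent infinite product (over all of $\Z$ in the ``good'' direction) times a factor that, far from hurting, grows in the ``bad'' direction. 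Carrying this telescoping/convergent-product argument through — rather than the crude termwise bound — is the real work, and it is exactly the place where I would cite and adapt the proof of \cite[Prop.\ 1(i)]{Its:2014} verbatim, checking only that replacing their Nekrasov factors with the $q_1^{-1}=q_2=q$ specialization \eqref{Nlm} does not change the combinatorics.
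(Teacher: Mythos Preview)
Your sketch correctly identifies that a naive per-factor lower bound of the form $|1-u^{\pm1}q^{m}|\ge c>0$ yields only $\prod_{i,j}|N_{\lambda_i,\lambda_j}|^{-1}\le c^{-4N}$, which together with the subexponential count of diagram pairs gives a positive radius of convergence but not an entire function. You then defer the upgrade to entirety to ``the sharper analysis in \cite{Its:2014}'', describing it vaguely as a ``telescoping/convergent-product argument''. This is where the gap lies: you have not identified the actual mechanism, and what you describe is not it.

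The missing idea, which the paper uses explicitly, is the hook length formula. For the diagonal factors one has
\[
|N_{\lambda,\lambda}(1;q^{-1},q)|=\prod_{s\in\lambda}\bigl|q^{h_\lambda(s)/2}-q^{-h_\lambda(s)/2}\bigr|^{2},
\]
and since $\bigl|\tfrac{q^{n/2}-q^{-n/2}}{q^{1/2}-q^{-1/2}}\bigr|\ge nL_1^{1/2}$ for some $L_1>0$ (this is where $|q|\ne1$ is used), the product is bounded below by $C^{|\lambda|}\prod_{s}h_\lambda(s)^{2}=C^{|\lambda|}\bigl(|\lambda|!/\dim\lambda\bigr)^{2}$. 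The off-diagonal factors get only the constant-per-box bound you already had. The point is that the diagonal factors alone supply factorial growth in the denominator. Then the identity $\sum_{|\lambda|=n}(\dim\lambda)^{2}=n!$ turns the sum over pairs with $|\lambda_1|+|\lambda_2|=N$ into something bounded by $C^{N}/(|\lambda_1|!\,|\lambda_2|!)$ summed appropriately, giving the exponential bound $\mathcal{F}(u;q^{-1},q|Z)<\exp(C'|Z|)$ and hence an entire function. None of this is telescoping or a convergent-product trick; it is the hook length formula plus the representation-theoretic dimension identity, and without naming these two ingredients your argument does not close.
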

\begin{proof}
	 There exist constants $L_1,L_2 \in \mathbb{R}_{>0}$ such that
	 \[\left|\frac{q^{n/2}-q^{-n/2}}{q^{1/2}-q^{-1/2}}\right|>|n| L_1^{1/2},\;\; \forall n \in \mathbb{Z}_{\neq 0}\qquad  \left|\frac{u^{1/2}q^{n/2}-u^{-1/2}q^{-n/2}}{q^{1/2}-q^{-1/2}}\right|>L_2^{1/2}, \;\; \forall n \in \mathbb{Z}.\]
	 Then we can bound
	   $\prod_{i,j=1}^2 N_{\lambda_i,\lambda_j}(u_i/u_j;q^{-1},q)$ as 
	 \begin{multline*}
	 	\Bigl| N_{\lambda_1,\lambda_1}(1;q^{-1},q) N_{\lambda_2,\lambda_2}(1;q^{-1},q) \Bigr|= \prod_{s \in \lambda_1}|q^{\frac12 h_{\lambda_1}(s)}-q^{-\frac12 h_{\lambda_1}(s)}|^2\\ \cdot \prod_{s \in\lambda_2}(\lambda_1 \leftrightarrow \lambda_2) > 
	 	 \frac{|\lambda_1|!^2|\lambda_2|!^2}{(\dim\lambda_1\dim\lambda_2)^2}\left|L_1(q^{1/2}-q^{-1/2})^2\right|^{|\lambda_1|+|\lambda_2|},
	 \end{multline*}
	 \begin{multline*}
	 \Bigl|N_{\lambda_1,\lambda_2}(u;q^{-1},q) N_{\lambda_2,\lambda_1}(u^{-1};q^{-1},q)\Bigr| =\prod_{s \in \lambda_1}|u^{\frac12}q^{\frac12(a_{\lambda_2}(s)+\ell_{\lambda_1}(s)+1)}-u^{-\frac{1}2}q^{-\frac12(a_{\lambda_2}(s)+\ell_{\lambda_1}(s)+1)}|^2\\ \cdot \prod_{s \in\lambda_2}(\lambda_1 \leftrightarrow \lambda_2, u\leftrightarrow u^{-1}) >   \left|L_2(q^{1/2}-q^{-1/2})^2\right|^{|\lambda_1|+|\lambda_2|},
	 \end{multline*}
	 where we used hook length formula for $\dim \lambda$.  Since $\sum_{|\lambda|=n} (\dim\lambda)^2=n!$ we have $\mathcal{F}(u;q|Z)<\exp\left|\dfrac{2|Z|}{L_1 L_2(q^{1/2}-q^{-1/2})^4}\right|$.
\end{proof}

To ensure convergence of infinite products like  $(u;q,q)_{\infty}=\prod_{i,j\geq 0}(1-uq^{i+j})$ we impose condition $|q|<1$. Using analytic continuation \eqref{qtrans} one can also work in the region $|q|>1$.

\begin{conj}\label{qHc}
The $q$-deformed conformal blocks satisfy  bilinear relations
\begin{multline}
 \sum_{2n\in\mathbb{Z}}  \frac{u^{2n}Z^{2n^2}}{\prod\limits_{\epsilon, \epsilon'=\pm1}(u^{\epsilon}q^{1+2\epsilon'n};q,q)_{\infty}}\kr{F}(uq^{-2n};q^{-1},q|q^{-1}Z)\kr{F}(uq^{2n};q^{-1},q|qZ)=\\
=(1-Z^{1/2})\sum_{2n\in\mathbb{Z}}  \frac{Z^{2n^2}}{\prod\limits_{\epsilon, \epsilon'=\pm1}(u^{\epsilon}q^{1+2\epsilon'n};q,q)_{\infty}}\kr{F}(uq^{-2n};q^{-1},q|Z)\kr{F}(uq^{2n};q^{-1},q|Z)
\label{bilconfrel}
\end{multline}
\end{conj}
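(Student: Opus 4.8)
The identity \eqref{bilconfrel} is the $K$-theoretic (``$5d$'') lift of the $c=1$ Virasoro bilinear relations used in \cite{KGP} to prove \eqref{GIL}, and — through the AGT dictionary recalled in Subsection \ref{ssec:qdefblocks} — of the Nakajima--Yoshioka blowup equation for the pure $5d$ $SU(2)$ gauge theory specialized to the self-dual $\Omega$-background $q_1q_2=1$ (here $q_1^{-1}=q_2=q$). Two approaches suggest themselves. The representation-theoretic one realizes $\mathcal{F}(u;q^{-1},q|Z)$ as the square of a Whittaker vector for the $q$-deformed Virasoro algebra and derives \eqref{bilconfrel} from a branching/conformal-embedding identity relating a tensor product of two $q$-Virasoro Fock modules to a single Fock module over a larger algebra (a $q$-deformation of the $\widehat{\mathfrak{sl}}_2$ level-$1$, equivalently lattice, vertex algebra), the sum over $2n\in\mathbb{Z}$ being the decomposition into $q$-Virasoro $\otimes$ $q$-Virasoro isotypic components. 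The gauge-theoretic one proceeds by $K$-theoretic localization on the blowup of $\mathbb{C}^2$. I would attempt the gauge-theoretic route and use the representation-theoretic picture to pin down normalizations.

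\textbf{Gauge-theoretic plan.} First, rewrite each side of \eqref{bilconfrel} as a sum over torus-fixed points of the moduli space of framed rank-$2$ torsion-free sheaves on $\widehat{\mathbb{C}^2}$, the blowup of $\mathbb{C}^2$ at the origin with exceptional curve $C$: such a point is labelled by an ordered pair of Young-diagram data on $\mathbb{C}^2$ together with the first Chern class $c_1=2n\,[C]$, $2n\in\mathbb{Z}$, of the tautological sheaf, and its $K$-theoretic weight factorizes into a product of two copies of the $\mathbb{C}^2$-partition function with Coulomb modulus shifted as $u\mapsto uq^{\pm 2n}$ and instanton parameter evaluated at $q^{\pm1}Z$. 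This reproduces the left-hand side; the double Pochhammer $\prod_{\epsilon,\epsilon'=\pm1}(u^{\epsilon}q^{1+2\epsilon' n};q,q)_\infty$ emerges as the regularized perturbative (one-loop) weight of the flux-$2n$ sector, which is exactly the normalization already appearing in \eqref{eq:T}. Second, the right-hand side is produced by twisting the tautological sheaf by $\mathcal{O}(C)$ before integration: the extra $K$-theory class of a line bundle on $\widehat{\mathbb{C}^2}$ contributes, in the pure-gauge (Whittaker) specialization of all flavour fugacities, exactly the scalar $(1-Z^{1/2})$. Third, one checks that the resulting double series converges near $Z=0$ — Proposition \ref{prop:convF} controls each $\mathcal{F}$, and the factor $Z^{2n^2}$ together with the decay of the Pochhammer denominators controls the sum over $n$ — so that the formal identity is an identity of holomorphic functions, the branch ambiguity $Z^{1/2}\mapsto-Z^{1/2}$ being then the ambiguity of analytic continuation around $Z=0$, as already exploited in \eqref{eq:Todac}.

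\textbf{Main obstacle.} The technical core is the blowup/vanishing step: although $K$-theoretic blowup equations are available in considerable generality (Nakajima--Yoshioka and G\"ottsche--Nakajima--Yoshioka), extracting from them precisely the Whittaker-limit identity \eqref{bilconfrel} — with the correct $q$-Barnes normalization encoded in $C(u;q|Z)$, the correct Gaussian power $Z^{2n^2}$ of the instanton parameter, and a clean right-hand side $(1-Z^{1/2})\times(\cdots)$ rather than a longer polynomial — requires careful bookkeeping of the $\mathcal{O}(C)$-twist, of the equivariant weight attached to $\mathcal{O}(C)$, and of the order of limits in which the fundamental and antifundamental masses are sent to infinity to reach the pure-gauge theory. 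On the representation-theoretic side the corresponding obstacle is that branching rules and conformal embeddings for $q$-deformed $\mathcal{W}$-algebras are far less developed than classically: one must identify the ``large'' algebra and its Fock realization, construct its Whittaker vector, and prove that under the branching it maps to exactly the combination of products of $q$-Virasoro Whittaker vectors weighted by the stated Pochhammer coefficients — this last matching being the essential point. It is in fact the content of the more general relations collected in Appendix \ref{app:toralg}, which would themselves be proved through exchange relations for Awata--Feigin--Shiraishi intertwiners of the Ding--Iohara--Miki algebra at the relevant slope.

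\textbf{Supporting checks.} Short of a complete proof, the relation can be tested by: (i) the $q\to1$ degeneration of Subsection \ref{ssec:contlim}, under which \eqref{bilconfrel} must reduce to the $c=1$ Virasoro bilinear relations established in \cite{KGP}; (ii) verification order by order in $Z$, where at each order \eqref{bilconfrel} is a finite identity of rational functions of $u,q$ summed over finitely many pairs of Young diagrams; and (iii) consistency with the special $(u,s)$ of Subsection \ref{ssec:algsol} corresponding to $q$-deformed algebraic solutions of \eqref{Pp}, where both sides can be evaluated in closed form.
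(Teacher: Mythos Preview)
The paper does not prove this statement: it is explicitly labelled a \emph{Conjecture}, and the only evidence offered is (a) computer verification of \eqref{bilconfrel} up to $Z^4$ analytically and $Z^{12}$ numerically, and (b) the $q\to 1$ limit reproducing the known $c=1$ Virasoro relation \eqref{bilincont}. There is no argument beyond that.

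Your proposal is likewise not a proof but a research outline, and you are candid about this (``Short of a complete proof\dots''). Your supporting checks (i)--(iii) coincide exactly with what the paper actually does: the continuous limit is Subsection~\ref{ssec:contlim}, the order-by-order verification is the computer check just mentioned, and the algebraic-solution consistency is Subsection~\ref{ssec:algsol} (which, note, is itself conditional on Conjecture~\ref{qHc} via Theorem~\ref{Pochht}). So on the evidentiary side you and the paper are aligned.

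Where you go further is in sketching two genuine proof strategies. The blowup route is well motivated --- the paper itself remarks in Appendix~\ref{app:toralg} that the generic-$(q_1,q_2)$ relation \eqref{eq:genbilin} ``can be viewed as a blow-up equation, similar to one studied in paper \cite{Nakajima:2005}'' --- but you correctly flag the obstacle: extracting precisely the Whittaker specialization with the clean factor $(1-Z^{1/2})$ and the exact Pochhammer normalization from the general Nakajima--Yoshioka machinery is the whole difficulty, and you have not done it. The representation-theoretic route via DIM intertwiners is also plausible and again only gestured at. In short: your proposal identifies the right neighbourhood of ideas and matches the paper's partial evidence, but neither you nor the paper closes the gap, and the statement remains a conjecture.
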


We do not have proof of this conjecture, but we have two arguments in support. First this conjecture was checked by computer calculation up to $Z^4$ analytically and up to $Z^{12}$ numerically.

Second argument is the fact that continuous limit of this relation gives  known (\cite[eq. (4.29)]{KGP}) relation for irregular Virasoro conformal blocks $\mathcal{F}(\Delta|z)$
\begin{equation}
-2z^{1/2}\sum_{2n\in\mathbb{Z}} \frac{\mathcal{F}((\sg+n)^2|z) \mathcal{F}((\sg-n)^2|z)}{\prod_{k=1}^{2|n|-1}(k^2-4\sg^2)^{2(2|n|-k)}(4\sg^2)^{2|n|}} =
\sum_{2n\in\mathbb{Z}} \frac{D^2_{[\log z]}(\mathcal{F}(\sg+n)^2|z), \mathcal{F}(\sg-n)^2|z)}{\prod_{k=1}^{2|n|-1}(k^2-4\sg^2)^{2(2|n|-k)}(4\sg^2)^{2|n|}}
\label{bilincont}
\end{equation}
Here $D^2_{[\log z]}(f(z),g(z))=z^2(f''g-f'g'+fg'')+z(f'g+fg')$ denotes second Hirota
differential operator with respect to variable $\log z$.
We shall provide continuous limit and obtain \eqref{bilincont} in Subsection \ref{ssec:contlim}.

\begin{Remark}
Most results of this subsection can be stated for any $q_1, q_2$, see e.g. Appendix \ref{app:toralg} for the bilinear relations. But for $\tau$ functions we will use only conformal blocks with $q_1q_2=1$. 
\end{Remark}

\subsection{$q$-deformation of the formula for $\tau$ function}
\label{ssec:qdeftau}

\begin{defn}\label{Def:tau}
Function $\mathcal{T}(u,s;q|Z)$ given by the formula
\begin{equation}
\mathcal{T}(u,s;q|Z)=\sum_{n \in \mathbb{Z}}s^n C(uq^{2n};q|Z) \frac{\mathcal{F}(uq^{2n};q^{-1},q|Z)}{(uq^{2n+1};q,q)_\infty(u^{-1}q^{-2n+1};q,q)_\infty}\label{T}
\end{equation} 
is called $q$-deformed $\tau$ function of Painlev\'e III($D_8$) equation
if function $C(u;q|Z)$ satisfy equations
\begin{eqnarray}
\frac{C(uq;q|Z)C(uq^{-1};q|Z)}{C(u;q|Z)^2}&=&-Z^{1/2}\label{C01}\\
\frac{C(uq;q|qZ)C(uq^{-1};q|q^{-1}Z)}{C(u;q|Z)^2}&=&-uZ^{1/4}\label{C11}\\
\frac{C(u;q|qZ)C(u;q|q^{-1}Z)}{C(u;q|Z)^2}&=&Z^{-1/4}\label{C10}.
\end{eqnarray} 

If $C(u;q|Z)=C(u^{-1};q|Z)$ then
functions $C, \mathcal{T}$ are called $u$-inverse invariant.
%
\end{defn}

Evidently, $C(u;q|Z)$ could be multiplied on any function $\widetilde{C}(u;q|Z)$, which  satisfy homogeneous equations \eqref{C01}, \eqref{C11}, \eqref{C10}.

\begin{example}\label{Ex:C}
The following examples of $C(u;q|Z)$ satisfy \eqref{C01},\eqref{C11},\eqref{C10} 
\begin{eqnarray}
C_1(u;q|Z)&=&\Gamma((qZ)^{1/4};q^{1/4},q^{1/4})^3/\left(\Gamma(i(qZu)^{1/4};q^{1/4},q^{1/4})\,\Gamma(i(qZ)^{1/4}u^{-1/4};q^{1/4},q^{1/4})\right)
\\
C_c(u;q|Z)&=&(-1)^{2\left(\frac{\log u}{2\log q}\right)^2}\Gamma(-(qZ)^{1/4};q^{1/4},q^{1/4})\exp\left(\frac{\log ^2u\log Z}{4\log^2 q}\right).\label{exCc}
\end{eqnarray}
Here elliptic Gamma function is defined by $\Gamma(u;q,q)=(q^2 u^{-1};q,q)_{\infty}/(u;q,q)_{\infty}$. 
Necessary definitions and properties about $q$-deformed special functions are collected in Appendix \ref{app:q}.

Both functions $C_1,C_c$ are $u$-inverse invariant. Function $C_1$ is meromorphic as function on $Z^{1/2}$ (or on $u^{1/4}$) on complex plain. Function $C_c$ is useful to make continuous limit (Subsection \ref{ssec:contlim}). But $C_c$ is not meromorphic. 
\end{example}

\begin{Remark}
Left sides of the equations \eqref{C01}, \eqref{C11}, \eqref{C10} are multiplicative second order difference derivatives of
function on two variables $(u|Z)$ in directions $(1|0), (1|1), (0|1)$ correspondingly. 
\end{Remark}

\begin{Remark}
We could hide $s^n$ into the $C(uq^{2n};q|Z)$ carrying out some simple factor from the sum.
Indeed, function $\frac{\theta(us^{1/4}q;q)}{\theta(us^{-1/4}q;q)}$ satisfies homogeneous equations \eqref{C01},\eqref{C11}, \eqref{C10} so
\begin{equation*}
\mathcal{T}(u,s;q|Z)=\frac{\theta(us^{1/4}q;q)}{\theta(us^{-1/4}q;q)}\sum_{n\in\mathbb{Z}}C_s(uq^{2n};q|Z) \frac{\mathcal{F}(uq^{2n};q^{-1},q|Z)}{(uq^{2n+1};q,q)_\infty(uq^{-2n+1};q,q)_\infty},
\end{equation*}
where $C_s(u;q|Z)=C(u;q|Z) \frac{\theta(us^{-1/4}q;q)}{\theta(us^{1/4}q;q)}$.
\end{Remark}

It is clear from the definition that
\begin{equation}
\mathcal{T}(uq^2,s;q|Z)=s^{-1}\mathcal{T}(u,s;q|Z)\label{ushift}.
\end{equation}
If the function $C$ are $u$-inverse invariant then 
\begin{equation}
\mathcal{T}(u,s;q|Z)=\mathcal{T}(u^{-1},s^{-1};q|Z) \label{uinv}
\end{equation}
If $C$ is not $u$-inverse invariant then the function $\mathcal{T}(u^{-1},s;q|Z)$ also satisfies definition \eqref{T} but for another function $C$.

\begin{conj}\label{conj:Toda}
Function $\mathcal{T}(u,s;q|Z)$ satisfy bilinear equations
\begin{equation} 
Z^{1/4} \mathcal{T}(u,s;q|qZ) \mathcal{T}(u,s;q|q^{-1}Z) = \mathcal{T}(u,s;q|Z)^2+Z^{1/2}\mathcal{T}(uq,s;q|Z)\mathcal{T}(uq^{-1},s;q|Z). \label{qTodai}
\end{equation}
\end{conj}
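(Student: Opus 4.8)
The plan is to reduce Conjecture~\ref{conj:Toda} to the bilinear relation on $q$-deformed conformal blocks stated in Conjecture~\ref{qHc} (equation~\eqref{bilconfrel}), which is treated as a known input for this reduction. First I would substitute the defining series~\eqref{T} into both sides of~\eqref{qTodai}. Each side becomes a double sum over $n_1,n_2\in\mathbb{Z}$ of products of two $C$-factors, two $\mathcal F$-factors, and the double-Pochhammer denominators, weighted by $s^{n_1+n_2}$. The key first step is to change summation variables to $N=n_1+n_2$ (the total $s$-degree, which must match term by term because of the quasi-periodicity~\eqref{ushift}, i.e. the identity must hold as a Laurent series in $s$) and $n=(n_1-n_2)/2$ (so $2n\in\mathbb Z$). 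Using the quasi-periodicity $C(uq^2;q|Z)=\dots$ implicit in Definition~\ref{Def:tau} one shifts all $C$-arguments back to be centered at $uq^{N}$; after this normalization the $N$-dependence should factor out of the ratio of the two sides, leaving an identity for each fixed $N$ that is (a shift of) a single universal identity.

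The second step is to extract the $C$-prefactors. On the left side one has the combination $C(uq^{N+2n};q|qZ)\,C(uq^{N-2n};q|q^{-1}Z)$ and on the right $C(uq^{N+2n};q|Z)\,C(uq^{N-2n};q|Z)$ together with, in the mixed term, $C(uq^{N+2n+1};q|Z)C(uq^{N-2n-1};q|Z)$. Relative to the ``diagonal'' normalization $C(uq^{N};q|Z)^2$ these ratios are exactly the multiplicative second difference derivatives controlled by~\eqref{C01},~\eqref{C11},~\eqref{C10}: iterating those three relations expresses each such ratio as an explicit monomial in $u,Z$ times a product of the form $\prod_{\epsilon,\epsilon'=\pm1}(u^{\epsilon}q^{1+2\epsilon'n};q,q)_\infty^{\pm1}$ — the same Pochhammer products that appear in~\eqref{bilconfrel}. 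Here one must be careful: the right side of~\eqref{qTodai} carries $Z^{1/2}$ while the right side of~\eqref{bilconfrel} has the factor $(1-Z^{1/2})$; the $\mathcal T(u,s;q|Z)^2$ term must combine with the diagonal part of the left side to reproduce the ``$1$'' in $(1-Z^{1/2})$, and the $Z^{1/2}\mathcal T(uq)\mathcal T(uq^{-1})$ term reproduces the ``$-Z^{1/2}$''. Tracking the half-integer powers of $Z$ and the signs (the $-Z^{1/2}$ in~\eqref{C01} is what produces these) is the bookkeeping heart of the argument; the analytic continuation around $Z=0$ used already in passing between~\eqref{TodaT} and~\eqref{eq:Todac} will be the mechanism that flips the sign appropriately.

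After these substitutions the denominators $(uq^{2n+1};q,q)_\infty(u^{-1}q^{-2n+1};q,q)_\infty$ coming from the definition of $\mathcal T$, together with the Pochhammer factors produced by the $C$-ratios, should assemble exactly into the coefficient $1/\prod_{\epsilon,\epsilon'}(u^{\epsilon}q^{1+2\epsilon'n};q,q)_\infty$ appearing in~\eqref{bilconfrel}, and the surviving $\mathcal F$-factors are $\mathcal F(uq^{2n};q^{-1},q|qZ)\mathcal F(uq^{-2n};q^{-1},q|q^{-1}Z)$ on one side and $\mathcal F(uq^{2n};q^{-1},q|Z)\mathcal F(uq^{-2n};q^{-1},q|Z)$ on the other — which is precisely~\eqref{bilconfrel}. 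Thus the claim follows from Conjecture~\ref{qHc}. I expect the main obstacle to be purely combinatorial: verifying that the monomial prefactors in $u$ and $Z$ produced by iterating~\eqref{C01}--\eqref{C10} (which depend on $n$ quadratically, giving the $Z^{2n^2}$ and $u^{2n}$ in~\eqref{bilconfrel}) match on the nose, and that the half-integer $Z$-powers and signs line up so that the $\mathcal T^2$ and $Z^{1/2}\mathcal T\mathcal T$ terms correctly reconstitute the factor $(1-Z^{1/2})$; there is also a minor convergence/rearrangement point (justified by Proposition~\ref{prop:convF} and $|q|<1$) in interchanging the double sum with the change of variables. Since~\eqref{bilconfrel} itself is only conjectural, the statement is likewise stated as a conjecture, but the reduction above is unconditional.
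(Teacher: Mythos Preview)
Your plan is essentially the paper's own proof of Theorem~\ref{mainbilin} (the equivalence of Conjectures~\ref{qHc} and~\ref{conj:Toda}): substitute~\eqref{T}, collect by $s$-degree, recenter $u$ so that every $s^m$ condition becomes a single universal identity, then divide by $C(u;q|Z)^2$ and iterate~\eqref{C01}--\eqref{C10} to reduce all $C$-ratios to monomials, arriving at~\eqref{bilconfrel} split into integer and half-integer $Z$-powers.

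Two small inaccuracies in your bookkeeping, though they do not affect the strategy. First, the ratios $C(uq^{k};q|\cdot)C(uq^{-k};q|\cdot)/C(u;q|\cdot)^2$ are \emph{pure monomials} in $u,Z$ (the paper computes them as $\beta_k=(-1)^kZ^{k^2/2}$ and $\gamma_k=(-1)^ku^kZ^{k^2/2}$); they do not produce any Pochhammer factors. The products $\prod_{\epsilon,\epsilon'}(u^{\epsilon}q^{1+2\epsilon'n};q,q)_\infty$ in~\eqref{bilconfrel} come directly from the denominators already present in~\eqref{T}, not from the $C$-manipulation. Second, no analytic continuation is needed to get the sign in $(1-Z^{1/2})$: the $-Z^{1/2}$ on the right of~\eqref{C01} supplies the $(-1)^k$ in $\beta_k$, so the $Z^{1/2}\mathcal{T}(uq)\mathcal{T}(uq^{-1})$ term picks up its minus sign algebraically. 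The ``combine with the diagonal part of the left side'' phrasing is misleading---the left side of~\eqref{qTodai} maps cleanly to the left side of~\eqref{bilconfrel}; it is the two right-hand terms of~\eqref{qTodai} that become the $1$ and the $-Z^{1/2}$ respectively.
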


\begin{thm}
Conjecture \ref{qHc} is equivalent to Conjecture \ref{conj:Toda}.  \label{mainbilin}
\end{thm}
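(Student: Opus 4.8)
## Proof Proposal for Theorem \ref{mainbilin}

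The plan is to substitute the explicit series \eqref{T} for $\mathcal{T}(u,s;q|Z)$ into the bilinear equation \eqref{qTodai} and extract the coefficient of each power of $s$, thereby reducing the $\tau$-function identity to an identity among $q$-deformed conformal blocks. First I would write each of the four factors appearing in \eqref{qTodai} as a double sum over $\mathbb{Z}$: for instance $\mathcal{T}(u,s;q|qZ)\mathcal{T}(u,s;q|q^{-1}Z)$ becomes $\sum_{m,n}s^{m+n}C(uq^{2m};q|qZ)C(uq^{2n};q|q^{-1}Z)(\ldots)^{-1}\mathcal{F}(uq^{2m};q^{-1},q|qZ)\mathcal{F}(uq^{2n};q^{-1},q|q^{-1}Z)$. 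Setting $m+n=k$ (the power of $s$) and $m-n=2j$ (so $j\in\frac12\mathbb{Z}$), both sides of \eqref{qTodai} become $\sum_k s^k(\text{something})$, and the theorem is proved by matching the coefficient of $s^k$ for every $k$. Because $C$ satisfies the three difference equations \eqref{C01}, \eqref{C11}, \eqref{C10}, all the $C$-factors in a given $s^k$-coefficient can be pulled out as a common prefactor $C(uq^{k};q|Z)^2$ times explicit powers of $Z$ and $u$ depending only on $j$; this is exactly the role those three equations were designed to play, and it is a routine (if fiddly) bookkeeping computation.

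Second, I would carry out this extraction carefully on both sides. On the left-hand side, the $C$-ratio produced by \eqref{C11} and its $q^{-1}$-counterpart contributes a factor like $u^{2j}Z^{2j^2}$ (the quadratic-in-$j$ exponent of $Z$ reflecting that we are taking a product of two shifted copies), together with the shift $Z^{1/4}$ already present; on the right-hand side the first term gives $C(uq^k;q|Z)^2$ with no extra $Z$-power beyond $Z^{2j^2}$, while the second term, via \eqref{C01}, produces the characteristic $Z^{1/2}$ and a relative sign. After dividing everything by the common $C(uq^k;q|Z)^2$ and by the common Pochhammer denominators, the coefficient of $s^k$ in \eqref{qTodai} becomes precisely \eqref{bilconfrel} with $u$ replaced by $uq^k$ (and $n$ there playing the role of our half-integer $j$, with the denominator $\prod_{\epsilon,\epsilon'}(u^\epsilon q^{1+2\epsilon'n};q,q)_\infty$ matching the surviving Pochhammer factors from the two $\mathcal{T}$'s). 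Since $k$ ranges over all of $\mathbb{Z}$ and $uq^k$ sweeps the same orbit, the system of identities obtained over all $k$ is equivalent to the single relation \eqref{bilconfrel} (for generic $u$, hence for all $u$ by analyticity, using Proposition \ref{prop:convF} for convergence). Running the computation in reverse shows that \eqref{bilconfrel} implies \eqref{qTodai}, giving the equivalence.

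The main obstacle, and the step deserving the most care, is the bookkeeping of the half-integer index $j$ and the verification that the $Z$- and $u$-powers produced by the three $C$-equations \eqref{C01}--\eqref{C11}--\eqref{C10} combine exactly into the exponents $u^{2n}Z^{2n^2}$ and the factor $(1-Z^{1/2})$ appearing in \eqref{bilconfrel}; in particular one must check that the analytic continuation around $Z=0$ used implicitly (cf.\ the passage from \eqref{TodaT} to \eqref{eq:Todac}) is compatible with the branch choices in $C_1$ or $C_c$, and that no spurious $s$-independent factor is lost. A secondary technical point is justifying the interchange of the double summation with the reindexing $m+n=k$, $m-n=2j$; this is legitimate because, by Proposition \ref{prop:convF}, $\mathcal{F}$ is entire in $Z$ with controlled growth, and the Pochhammer denominators decay super-exponentially in $n$, so all the series converge absolutely on compact sets in $Z$ for generic $u$. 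Once these convergence and branch issues are dispatched, the equivalence is a formal consequence of matching $s$-coefficients.
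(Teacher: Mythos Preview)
Your approach is essentially the paper's: substitute \eqref{T} into \eqref{qTodai}, collect powers of $s$, and reduce the resulting $C$-ratios to explicit monomials in $u$ and $Z$ via \eqref{C01}--\eqref{C10}. The paper packages the ``routine bookkeeping'' you mention into two auxiliary sequences $\beta_k=C(uq^k;q|Z)C(uq^{-k};q|Z)/C(u;q|Z)^2$ and $\gamma_k=C(uq^k;q|qZ)C(uq^{-k};q|q^{-1}Z)/\bigl(C(u;q|qZ)C(u;q|q^{-1}Z)\bigr)$, solves the second-order recursions they satisfy to get $\beta_k=(-1)^kZ^{k^2/2}$, $\gamma_k=(-1)^ku^kZ^{k^2/2}$, and then divides through; your reindexing $m+n=k$, $m-n=2j$ is the same maneuver as the paper's substitution $u\to uq^{-m}$, $n\to n-m/2$.

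One point in your write-up should be sharpened. A single $s^k$-coefficient does \emph{not} yield all of \eqref{bilconfrel} with $u$ replaced by $uq^k$. For fixed $k$ your index $j=(m-n)/2$ runs over $\mathbb{Z}$ when $k$ is even and over $\mathbb{Z}+\tfrac12$ when $k$ is odd, so the even-$k$ coefficients produce only the integer-$Z$-power part of \eqref{bilconfrel} and the odd-$k$ coefficients only the half-integer part (this is also where the $(1-Z^{1/2})$ on the right of \eqref{bilconfrel} gets split). The paper states this explicitly in its final sentence. Your conclusion survives because you ultimately take the union over all $k$, but the intermediate claim that each $s^k$-coefficient ``becomes precisely \eqref{bilconfrel}'' is not literally correct and would confuse a reader checking the details.
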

\begin{proof}Let us substitute the expression \eqref{T} for $\tau$ function into \eqref{qTodai} and collect terms with the same powers of $s$.
The vanishing condition of the $s^m$ coefficient has the form
\begin{multline*}
Z^{1/4}\sum_{n\in\mathbb{Z}}C(uq^{2n+2m};q|qZ)C(uq^{-2n};q|q^{-1}Z) \frac{\kr{F}(uq^{2n+2m};q^{-1},q|qZ)\kr{F}(uq^{-2n};q^{-1},q|q^{-1}Z)}{\prod\limits_{\epsilon=\pm1}((uq^{2n+2m})^{\epsilon}q;q,q)_{\infty}((uq^{-2n})^{\epsilon}q;q,q)_{\infty}}=\\
=\sum_{n\in\mathbb{Z}}C(uq^{2n+2m};q|Z)C(uq^{-2n};q|Z) \frac{\kr{F}(uq^{2n+2m};q^{-1},q|Z)\kr{F}(uq^{-2n};q^{-1},q|Z)}{\prod\limits_{\epsilon=\pm1}((uq^{2n+2m})^{\epsilon}q;q,q)_{\infty}((uq^{-2n})^{\epsilon}q;q,q)_{\infty}}+\\
+Z^{1/2}\sum_{n\in\mathbb{Z}}C(uq^{2n+2m+1};q|Z)C(uq^{-2n-1};q|Z) \frac{\kr{F}(uq^{2n+2m+1};q^{-1},q|Z)\kr{F}(uq^{-2n-1};q^{-1},q|Z)}{\prod\limits_{\epsilon=\pm1}((uq^{2n+2m+1})^{\epsilon}q;q,q)_{\infty}((uq^{-2n-1})^{\epsilon}q;q,q)_{\infty}}.
\end{multline*}

Let us substitute into these relations $u\rightarrow uq^{-m}$, $n\rightarrow n-m/2$. We see that for any $m$ vanishing conditions
of the coefficients with powers $s^{2(m+\delta)}, \delta=0,1/2$  are equivalent to 
\begin{multline*}
Z^{1/4}\sum_{n\in\mathbb{Z}+\delta}C(uq^{2n};q|qZ)C(uq^{-2n};q|q^{-1}Z) \frac{\kr{F}(uq^{2n};q^{-1},q|qZ)\kr{F}(uq^{-2n};q^{-1},q|q^{-1}Z)}{\prod\limits_{\epsilon=\pm1}((uq^{2n})^{\epsilon}q;q,q)_{\infty}((uq^{-2n})^{\epsilon}q;q,q)_{\infty}}=\\
=\sum_{n\in\mathbb{Z}+\delta}C(uq^{2n};q|Z)C(uq^{-2n};q|Z) \frac{\kr{F}(uq^{2n};q^{-1},q|Z)\kr{F}(uq^{-2n};q^{-1},q|Z)}{\prod\limits_{\epsilon=\pm1}((uq^{2n})^{\epsilon}q;q,q)_{\infty}((uq^{-2n})^{\epsilon}q;q,q)_{\infty}}+\\
+Z^{1/2}\sum_{n\in\mathbb{Z}+\delta}C(uq^{2n+1};q|Z)C(uq^{-2n-1};q|Z) \frac{\kr{F}(uq^{2n+1};q^{-1},q|Z)\kr{F}(uq^{-2n-1};q^{-1},q|Z)}{\prod\limits_{\epsilon=\pm1}((uq^{2n+1})^{\epsilon}q;q,q)_{\infty}((uq^{-2n-1})^{\epsilon}q;q,q)_{\infty}}.
\end{multline*}
We want to divide these conditions on $C(u;q|Z)^2$ and then simplify arising expressions like
$\frac{C(\cdot;q|\cdot)C(\cdot;q|\cdot)}{C(\cdot;q|\cdot)^2}$.
Introduce auxiliary functions $\beta_k$, $\gamma_k$, $k\in \mathbb{Z}$ by formulas
\begin{equation}
\beta_k=\frac{C(uq^{k};q|Z)C(uq^{-k};q|Z)}{C(u;q|Z)^2}, \quad \gamma_k=\frac{C(uq^{k};q|qZ)C(uq^{-k};q|q^{-1}Z)}{C(u;q|qZ)C(u;q|q^{-1}Z)}.
\end{equation}
Elementary calculations give us
\begin{equation}
\beta_k=Z\beta^2_{k-1}/\beta_{k-2}, \quad \gamma_k=Z\gamma^2_{k-1}/\gamma_{k-2}. \label{bgd2}
\end{equation}
Evidently $\beta_0=\gamma_0=1$. From \eqref{C01}, \eqref{C11}, \eqref{C10} we have $\beta_1=-Z^{1/2}, \gamma_1=-uZ^{1/2}$.
So we have unique solution of \eqref{bgd2}
\begin{equation}
\beta_k=(-1)^kZ^{k^2/2},\quad \gamma_k=(-1)^ku^kZ^{k^2/2}.
\end{equation}
Using these results and \eqref{C10} we obtain
\begin{multline*}
\sum_{n\in\mathbb{Z}+\delta}u^{2n}Z^{2n^2} \frac{\kr{F}(uq^{2n};q^{-1},q|qZ)\kr{F}(uq^{-2n};q^{-1},q|q^{-1}Z)}{\prod\limits_{\epsilon=\pm1}((uq^{2n})^{\epsilon}q;q,q)_{\infty}((uq^{-2n})^{\epsilon}q;q,q)_{\infty}}=\\
=\sum_{n\in\mathbb{Z}+\delta}Z^{2n^2} \frac{\kr{F}(uq^{2n};q^{-1},q|Z)\kr{F}(uq^{-2n};q^{-1},q|Z)}{\prod\limits_{\epsilon=\pm1}((uq^{2n})^{\epsilon}q;q,q)_{\infty}((uq^{-2n})^{\epsilon}q;q,q)_{\infty}}-\\
-Z^{1/2}\sum_{n\in\mathbb{Z}+\delta}Z^{(2n+1)^2/2} \frac{\kr{F}(uq^{2n+1};q^{-1},q|Z)\kr{F}(uq^{-2n-1};q^{-1},q|Z)}{\prod\limits_{\epsilon=\pm1}((uq^{2n+1})^{\epsilon}q;q,q)_{\infty}((uq^{-2n-1})^{\epsilon}q;q,q)_{\infty}},
\end{multline*}
Substituting in last sum $n\rightarrow n-1/2$ we obtain
\begin{multline*}
\sum_{n\in\mathbb{Z}+\delta}u^{2n}Z^{2n^2} \frac{\kr{F}(uq^{2n};q^{-1},q|qZ)\kr{F}(uq^{-2n};q^{-1},q|q^{-1}Z)}{\prod\limits_{\epsilon=\pm1}((uq^{2n})^{\epsilon}q;q,q)_{\infty}((uq^{-2n})^{\epsilon}q;q,q)_{\infty}}=\\
=\sum_{n\in\mathbb{Z}+\delta}Z^{2n^2} \frac{\kr{F}(uq^{2n};q^{-1},q|Z)\kr{F}(uq^{-2n};q^{-1},q|Z)}{\prod\limits_{\epsilon=\pm1}((uq^{2n})^{\epsilon}q;q,q)_{\infty}((uq^{-2n})^{\epsilon}q;q,q)_{\infty}}-\\
-Z^{1/2}\sum_{n\in\mathbb{Z}+\delta+1/2}Z^{2n^2} \frac{\kr{F}(uq^{2n};q^{-1},q|Z)\kr{F}(uq^{-2n};q^{-1},q|Z)}{\prod\limits_{\epsilon=\pm1}((uq^{2n})^{\epsilon}q;q,q)_{\infty}((uq^{-2n})^{\epsilon}q;q,q)_{\infty}},
\end{multline*}

And this is a result of splitting \eqref{bilconfrel} into part with integer powers of $Z$ and with half-integer powers of $Z$.
\end{proof}



\subsection{Continuous limit of $q$-deformed $\tau$ function}
\label{ssec:contlim}
\begin{prop}
	Formula for $q$-deformed $\tau$ function $\mathcal{T}(u,s;q|Z)$ could be rewritten in next way
	\begin{equation}
	\mathcal{T}(u,s;q|Z)={C}(u;q|Z)\sum_{n\in\mathbb{Z}} \td{s}(u,s;q|Z)^n Z^{n^2+n/2}
	\frac{\mathcal{F}(uq^{2n};q^{-1},q|Z)}{\prod\limits_{\epsilon=\pm1}((uq^{2n})^{\epsilon}q;q,q)_{\infty}}, \label{Tr}
	\end{equation}
	where
	\begin{equation}
	\td{s}(u,s;q|Z)=-\left(\frac{{C}(u;q|Z)}{{C}(uq^{-1};q|Z)}\right)^2s 
	\end{equation}
\end{prop}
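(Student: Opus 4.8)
The plan is to compute the $s$-dependence of the formula \eqref{T} for $\mathcal{T}$ explicitly and absorb it, together with the $n$-dependence coming from $C(uq^{2n};q|Z)$, into an overall factor $C(u;q|Z)$ and a rescaled base $\widetilde{s}$. Concretely, I would start from \eqref{T} and in the $n$-th summand rewrite the coefficient $C(uq^{2n};q|Z)$ as $C(u;q|Z)\cdot\beta_{2n}$ with $\beta_{2n}=C(uq^{2n};q|Z)/C(u;q|Z)$; this is exactly (up to a $u^{-1}$-argument factor) the ratio appearing in the proof of Theorem \ref{mainbilin}, so I expect $\beta_{2n}$ to be a monomial in $Z$ and in $C(u;q|Z)/C(uq^{-1};q|Z)$.

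The key computation is therefore the following recursion. Set $r_k(u;q|Z)=C(uq^{k};q|Z)/C(uq^{k-1};q|Z)$; then $C(uq^{2n};q|Z)/C(u;q|Z)=\prod_{k=1}^{2n}r_k$ for $n>0$ (with the obvious reciprocal expression for $n<0$). Equation \eqref{C01}, rewritten as $C(uq^{k+1};q|Z)C(uq^{k-1};q|Z)/C(uq^{k};q|Z)^2=-Z^{1/2}$ after the shift $u\mapsto uq^{k}$, says precisely that $r_{k+1}/r_k=-Z^{1/2}$, i.e. $r_k=(-Z^{1/2})^{k-1}r_1$ with $r_1=C(uq;q|Z)/C(u;q|Z)$. (Note that by \eqref{C01} applied at $k=0$ we also have $r_1 r_0^{-1}\cdot\ldots$; more directly, \eqref{C01} gives $C(uq;q|Z)C(uq^{-1};q|Z)/C(u;q|Z)^2=-Z^{1/2}$, so $r_1/r_0=-Z^{1/2}$ where $r_0=C(u;q|Z)/C(uq^{-1};q|Z)$, consistent with the general rule.) Multiplying, $\prod_{k=1}^{2n}r_k=(-Z^{1/2})^{\binom{2n}{2}}r_0^{2n}(-Z^{1/2})^{2n}=(-Z^{1/2})^{2n^2+n}\,r_0^{2n}$, i.e. $C(uq^{2n};q|Z)/C(u;q|Z)=(-1)^{2n^2+n}Z^{n^2+n/2}\,(C(u;q|Z)/C(uq^{-1};q|Z))^{2n}$, and since $2n^2+n$ is even for $n\in\mathbb{Z}$ this is $Z^{n^2+n/2}\bigl(C(u;q|Z)/C(uq^{-1};q|Z)\bigr)^{2n}$; the same formula holds for $n<0$ by inverting. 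Substituting into \eqref{T} and combining $s^n$ with the newly produced factor $(C(u;q|Z)/C(uq^{-1};q|Z))^{2n}$ and the sign $(-1)^n$ (which I will have to track carefully: $(-Z^{1/2})^{2n^2}=Z^{n^2}$ and $(-Z^{1/2})^{2n}=Z^n$, so in fact no leftover sign appears unless one writes $\widetilde{s}$ with the explicit $-$ sign as the statement does — so the $-$ in the definition of $\widetilde{s}$ must come from splitting off a factor $(-1)^n$ that I need to locate, presumably from a more careful bookkeeping of $(-Z^{1/2})^{k-1}$ across $k=1,\dots,2n$), I get exactly \eqref{Tr} with $\widetilde{s}=-\bigl(C(u;q|Z)/C(uq^{-1};q|Z)\bigr)^2 s$. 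Finally I should double-check the denominator: $(u^{-1}q^{-2n+1};q,q)_\infty=(u^{-1}q^{-(2n-1)};q,q)_\infty$, and one has to confirm that $(uq^{2n+1};q,q)_\infty(u^{-1}q^{-2n+1};q,q)_\infty$ equals $\prod_{\epsilon=\pm1}((uq^{2n})^{\epsilon}q;q,q)_\infty$, which is immediate since $(uq^{2n})q=uq^{2n+1}$ and $(uq^{2n})^{-1}q=u^{-1}q^{-2n+1}$.

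The step I expect to be the only real subtlety is the sign bookkeeping — reconciling the $(-1)^{2n^2+n}=1$ that the even-exponent argument gives with the explicit minus sign in the definition of $\widetilde{s}$. The resolution is that one pulls out $Z^{n^2+n/2}$ (the total $Z$-power) together with $(-1)^n$ from $\prod_{k=1}^{2n}(-Z^{1/2})^{k-1}\cdot(-Z^{1/2})^{2n}$ by grouping the $2n$ copies of $-1$ coming from the trailing $(-Z^{1/2})^{2n}$ factor as $(-1)^n$ paired with each of the $r_0^{2n}$ factors — that is, writing $(-Z^{1/2})^{2n}r_0^{2n}=(-r_0^2)^n Z^{n}$ and absorbing $(-r_0^2)^n$ into $\widetilde{s}^{\,n}$. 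This makes $\widetilde{s}=-r_0^2 s=-\bigl(C(u;q|Z)/C(uq^{-1};q|Z)\bigr)^2 s$ and leaves $Z^{n^2+n/2}$ as claimed; the remaining even power $(-1)^{2n^2}=1$ is harmless. Everything else is a routine substitution, so once the sign grouping is fixed the proof is a one-line rearrangement of \eqref{T}.
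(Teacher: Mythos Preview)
Your approach is the same as the paper's: use the second-difference relation \eqref{C01} to get a first-order recursion for the ratios $r_k=C(uq^k;q|Z)/C(uq^{k-1};q|Z)$, solve it, multiply, and repackage $s^n$ together with the resulting power of $r_0=C(u;q|Z)/C(uq^{-1};q|Z)$ into $\widetilde{s}^{\,n}$. The recursion $r_{k+1}=(-Z^{1/2})r_k$, its solution $r_k=(-Z^{1/2})^k r_0$, and the product
\[
\frac{C(uq^{2n};q|Z)}{C(u;q|Z)}=\prod_{k=1}^{2n}r_k=(-Z^{1/2})^{2n^2+n}\,r_0^{\,2n}
\]
are all correct and coincide with the paper's computation (the paper phrases it as $C(uq^{k};q|Z)/C(uq^{-1};q|Z)=(-1)^{k(k+1)/2}Z^{k(k+1)/4}r_0^{\,k+1}$, which for $k=2n$ is the same thing).

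The only genuine slip is in the sign step. The claim ``$2n^2+n$ is even for $n\in\mathbb{Z}$'' is false: $2n^2+n=n(2n+1)$ has the parity of $n$, so $(-1)^{2n^2+n}=(-1)^n$. That $(-1)^n$ is exactly what you need; combined with $r_0^{2n}s^n$ it gives $(-r_0^2 s)^n=\widetilde{s}^{\,n}$ directly, and the formula \eqref{Tr} follows. Your attempted ``resolution'' in the last paragraph is incorrect --- $(-Z^{1/2})^{2n}=Z^n$ with no sign, so the minus does not come from that factor; it comes from $(-Z^{1/2})^{\binom{2n}{2}}=(-1)^{n(2n-1)}Z^{n^2-n/2}=(-1)^nZ^{n^2-n/2}$. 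Fix this parity remark and the proof is complete and matches the paper's.
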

\begin{proof}
	
	Using \eqref{C01} we transform expression $C(uq^{k};q|Z)/C(u;q|Z)$ as
	\begin{align*}
	\frac{C(uq^{k};q|Z)}{C(uq^{k-1};q|Z)}&=\frac{C(uq^{k-1};q|Z)}{C(uq^{k-2};q|Z)}\frac{C(uq^k;q|Z)C(uq^{k-2};q|Z)}{C(uq^{k-1};q|Z)^2}=
	-\frac{C(uq^{k-1};q|Z)}{C(uq^{k-2};q|Z)}Z^{1/2}\Rightarrow\\
	\frac{C(uq^{k};q|Z)}{C(uq^{k-1};q|Z)}&=(-1)^{k}Z^{\frac{k}2}\frac{C(u;q|Z)}{C(uq^{-1};q|Z)}
	\Rightarrow\\
	\frac{C(uq^{k};q|Z)}{C(uq^{-1};q|Z)}&=(-1)^{\frac{k(k+1)}2}Z^{\frac{k(k+1)}{4}}\left(\frac{C(u;q|Z)}{C(uq^{-1};q|Z)}\right)^{k+1}.
	\end{align*}
Substituting the last expression to \eqref{T} we finish the proof.
\end{proof}

\begin{Remark}
	The $\tau$-function was defined as a series and it is convenient to prove convergence of this series using expression \eqref{Tr}. 
	The proof is analogous to one in \cite[Prop. 1 (ii)]{Its:2014}.
	
	First, note that bounds $L_1, L_2$ defined in the proof of Proposition \ref{prop:convF} are the same for every conformal block in the sum \eqref{Tr}. Therefore we can estimate conformal blocks by the same exponent. 
	
	Second we to rewrite Pochhammer symbols in terms
	of $q$-Barnes $\G$-function using \eqref{GN}. Then for $n>0$ we have 
	\begin{equation*}
	\G(1+2(\sg+n);q)\G(1-2(\sg+n);q)=\frac{\G(1-2\sg;q)}{\G(1+2\sg;q)} \frac{(-1)^n\theta(uq;q)^{2n}} {u^{n(2n-1)} q^{\frac{n(4n^2-1)}3} (1-q)^{2n} (q;q)^{4n}_{\infty}}\G^2(1+2(\sg+n);q).
	\end{equation*}
	where $\sg=\frac{\log u}{2\log q}$. It follows form  \cite[Prop. 3.1]{Nish} that $\log \G(x;q)\sim -\log(1-q)x^2/2$,  $\operatorname{Re}(x)\rightarrow+\infty$ (this is the first term in the sum in loc. cit. and one can show the all other terms are majorized by it). Using $|q|<1$ and this asymptotic behavior   we see that this coefficients dominate $Z^{n^2+n/2}\tilde{s}^n$.
	
	Calculation for $n<0$ is the same with the replacement $\sigma \leftrightarrow -\sigma$.
\end{Remark}

\begin{Remark}
	Moreover, we could rewrite $\mathcal{T}(u,s;q|Z)$ as sum of conformal blocks with some $\hat{s}$ and $q$-rational coefficients 
	(in analogous to continuous $\tau$ function (\cite{BSRamon})).
	Introduce functions $P_n(u;q)$ by 
	\begin{equation*}
	\frac{\prod\limits_{\epsilon=\pm1}(u^{\epsilon}q;q,q)_{\infty}}{\prod\limits_{\epsilon=\pm1}((uq^{2n})^{\epsilon}q;q,q)_{\infty}}
	=P_n(u;q)\left(\frac{(u;q)_{\infty}}{(u^{-1};q)_{\infty}}\right)^{2n},\quad n \in \mathbb{Z}
	\end{equation*}
	Then we can write  
	\begin{equation}\label{eq:Trational}
	\mathcal{T}(u,s;q|Z)=\frac{C(u;q|Z)}{\prod\limits_{\epsilon=\pm1}(u^{\epsilon}q;q,q)_{\infty}}\sum_{n\in\mathbb{Z}} Z^{n^2+n/2} \hat{s}^n P_n(u;q)
	\mathcal{F}(uq^{2n};q^{-1},q|Z),
	\end{equation}
	\begin{align*}
	\text{where}\quad 	 &P_n(u;q)=
	\frac{(-1)^n}{(1-u)^{2n}\prod_{i=1}^{2n-1}(u^{1/2}q^{i/2}-u^{-1/2}q^{-i/2})^{2(2n-i)}},\;\; n\geq 0,
	\\
	&P_n(u;q)= P_{-n}(u^{-1};q), \;\; n<0,\qquad  \hat{s}=-\left(\frac{C(u;q|Z)}{C(uq^{-1};q|Z)}\frac{(u;q)_{\infty}}{(u^{-1};q)_{\infty}}\right)^2s.
	\end{align*}
	The formula \eqref{eq:Trational} can also be used for the proof of convergence \eqref{T}. The proof goes similarly to the proof of Proposition \ref{prop:convF}, we bound all terms in the denominator of $P_n(u;q)$ using $L_1,L_2$.
\end{Remark}

Now we check that in $q \rightarrow 1^-$ limit (i.e. $q\in [1-\epsilon,1]$) the formula for   $\mathcal{T}(u,s;q|Z)$ gives the formula \eqref{GIL} for $\tau$ function of continuous Painlev\'e equation. As in Subsection \ref{ssec:contlimeq} it is convenient to use function $\mathcal{T}_c(u,s;q|Z)$ defined by \eqref{redqToda}. Moreover for this limit we take $q$-deformed $\tau$ functions with $C(u;q|Z)=C_c(u;q|Z)$ given by formula \eqref{exCc}.

\begin{thm}
\label{thmlim}
Let 
\begin{equation}
q=e^{\hbar},\quad Z=\hbar^4 z, \quad \sg=\frac{\log u}{2\hbar}. 
\end{equation}
Then the $\tau$-function $\mathcal{T}_c(u,(-1)^{-4\sigma}s;q|Z)$ with $C(u;q|Z)=C_c(u;q|Z)$ goes to $\tau(\sg,s|z)$ in the limit $\hbar\rightarrow -0$.
\end{thm}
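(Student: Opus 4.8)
The plan is to take the $\hbar \to -0$ limit of the series \eqref{Tr} term by term, after rewriting $\mathcal{T}_c$ in a form where each summand manifestly converges to the corresponding summand of \eqref{GIL}. First I would substitute $C(u;q|Z) = C_c(u;q|Z)$ from \eqref{exCc} into \eqref{Tr} and into the normalization \eqref{redqToda}, so that $\mathcal{T}_c$ becomes
\begin{equation*}
\mathcal{T}_c(u,s;q|Z) = \frac{(q;q,q)^2_\infty}{\Gamma(-(qZ)^{1/4};q^{1/4},q^{1/4})} C_c(u;q|Z) \sum_{n\in\mathbb{Z}} \widetilde{s}(u,s;q|Z)^n Z^{n^2+n/2} \frac{\mathcal{F}(uq^{2n};q^{-1},q|Z)}{\prod_{\epsilon=\pm1}((uq^{2n})^\epsilon q; q,q)_\infty}.
\end{equation*}
The elliptic Gamma factor from $C_c$ cancels against the one in \eqref{redqToda} up to the first summand ($n=0$), leaving the surviving prefactor to be $(q;q,q)^2_\infty \cdot (-1)^{2\sigma^2}\exp(\log^2 u \,\log Z/(4\log^2 q)) \cdot \prod_{\epsilon}(u^\epsilon q; q,q)_\infty^{-1}$, which I would analyze using the $q$-Barnes-function identities from Appendix \ref{app:q} (specifically \eqref{GN}), exactly as in the convergence Remark after \eqref{Tr}.

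The key technical inputs are three limits. (i) By Proposition \ref{prop:convF} (and its uniformity in the highest weight, already noted in the excerpt) the $q$-conformal block $\mathcal{F}(uq^{2n};q^{-1},q|Z) \to \mathcal{F}((\sigma+n)^2|z)$: under $Z = \hbar^4 z$, $u = q^{2\sigma} = e^{2\hbar\sigma}$ the summand $Z^{|\lambda_1|+|\lambda_2|}/\prod N_{\lambda_i,\lambda_j}$ of \eqref{confblock} limits to the corresponding summand of the Whittaker Virasoro block with $c=1$ and weight $(\sigma+n)^2$, because each factor $1 - u q_2^{a}q_1^{b}$ behaves like $-\hbar\big((\text{arm/leg data}) + 2(\sigma+n)\big) + O(\hbar^2)$ and the four-fold product assembles into the standard $(c=1)$ Virasoro $Z_{\text{bif}}$ denominator, with the $\hbar$-powers absorbed by $Z = \hbar^4 z$. (ii) The Pochhammer ratio: using the asymptotics $\log\Gamma(x;q) \sim -\log(1-q)\,x^2/2$ from \cite[Prop.~3.1]{Nish} together with \eqref{GN}, one shows $(q;q,q)^2_\infty/\big(\prod_\epsilon((uq^{2n})^\epsilon q;q,q)_\infty\big) \cdot (\text{power of }\hbar)$ converges to $C(\sigma+n) = 1/(\mathsf{G}(1-2(\sigma+n))\mathsf{G}(1+2(\sigma+n)))$, the Barnes-$\mathsf{G}$ coefficient of \eqref{GIL}; this is exactly the computation sketched in the convergence Remark, so I would cite it. (iii) The parameter $\widetilde{s}(u,s;q|Z)^n Z^{n^2+n/2} \to s^n z^{(\sigma+n)^2 - \sigma^2}$ once the $Z^{n^2+n/2}$ combines with the $z^{\sigma^2}$-type piece of $C_c$ and the $(-1)$-phases; the shift $s \mapsto (-1)^{-4\sigma}s$ in the statement is precisely the rescaling that absorbs the leftover $(-1)^{2\sigma^2}$ and $(-1)^{n}$ phases, so that $\widetilde{s}^n \to s^n \cdot (\text{clean } z\text{-powers})$.

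Having matched the three factors summand by summand, I would conclude $\mathcal{T}_c \to z^{\sigma^2}\sum_n C(\sigma+n) s^n z^{(\sigma+n)^2-\sigma^2}\mathcal{F}((\sigma+n)^2|z) = z^{\sigma^2}\cdot\tau(\sigma,s|z)/z^{\sigma^2}$; a prefactor $z^{\sigma^2}$ independent of $n$ is harmless since \eqref{GIL} is only defined up to such a factor (indeed the Toda equation \eqref{sgToda} and \eqref{tautau1} are invariant under $\tau \mapsto z^\beta \tau$), and tracking it carefully shows the surviving prefactor from step one is exactly $z^{\sigma^2}$ times the $n$-independent normalization implicit in \eqref{GIL}. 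Interchanging limit and sum is justified by dominated convergence: the convergence Remark after \eqref{Tr} already produced an $n$-uniform exponential bound on the blocks and a Gaussian-in-$n$ bound $Z^{n^2+n/2}\widetilde{s}^n$ on the coefficients, and these bounds are uniform for $q \in [1-\epsilon,1]$ after the substitution, giving a summable majorant. The main obstacle is the bookkeeping in step (ii)–(iii): getting the powers of $\hbar$, the $(1-q)$-factors, and the $(-1)$-phases to cancel exactly so that the three limits combine into precisely \eqref{GIL} with no stray constants — this is where the $(-1)^{-4\sigma}$ twist and the specific choice $C_c$ are essential, and it requires careful use of the $q$-Barnes asymptotics rather than any new idea.
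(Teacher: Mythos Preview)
Your approach is essentially the paper's: rewrite $\mathcal{T}_c$ via \eqref{Tr} with $C=C_c$ to obtain \eqref{toconttau}, then take the $\hbar\to -0$ limit term by term (block factor, Barnes coefficient, $Z$-power/$s$-phase), and justify swapping limit and sum by bounds uniform on $q\in[1-\epsilon,1]$.

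One correction to step~(ii): the asymptotic $\log\mathsf{G}(x;q)\sim -\tfrac12\log(1-q)\,x^2$ from \cite[Prop.~3.1]{Nish} is a large-$\operatorname{Re}(x)$ statement at \emph{fixed} $q$; the Remark after \eqref{Tr} uses it only to prove convergence of the $n$-series, not the $q\to1$ limit. For the pointwise $q\to1^-$ limit of the coefficient you need Theorem~\ref{Glim} (that $\mathsf{G}(x;q)\to\mathsf{G}(x)$) after the rewriting via \eqref{GN}
\[
Z^{(\sigma+n)^2}\,\frac{(q;q,q)^2_\infty}{\prod_{\epsilon=\pm1}((uq^{2n})^\epsilon q;q,q)_\infty}
=\frac{1}{\mathsf{G}(1+2(\sigma+n);q)\,\mathsf{G}(1-2(\sigma+n);q)}\left(\frac{Z}{(1-q)^4}\right)^{(\sigma+n)^2},
\]
together with $Z/(1-q)^4\sim z$. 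Also, in step~(iii) there is no extra $(-1)^n$: computing $\widetilde{s}$ for $C_c$ gives $\widetilde{s}=(-1)^{4\sigma}Z^{2\sigma-1/2}s$, so $\widetilde{s}^n Z^{n^2+n/2}=((-1)^{4\sigma}s)^n Z^{(\sigma+n)^2-\sigma^2}$ directly, and the twist $s\mapsto(-1)^{-4\sigma}s$ removes the phase cleanly while $(-1)^{2\sigma^2}$ remains as an overall $n$-independent constant.
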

\begin{proof}
Rewrite $\mathcal{T}_c(u,s;q|Z)$ in form \eqref{Tr} using \eqref{exCc}
\begin{equation}
\mathcal{T}_c(u,s;q|Z)=(-1)^{2\sg^2}\sum_{n\in\mathbb{Z}}  Z^{(\sg+n)^2} ((-1)^{4\sg}s)^n
\frac{\mathcal{F}(uq^{2n};q^{-1},q|Z)}{\prod\limits_{\epsilon=\pm1}((uq^{2n})^{\epsilon}q;q,q)_{\infty}}\label{toconttau}
\end{equation}

First, we prove convergence of conformal blocks $\mathcal{F}(u;q^{-1},q|Z) \rightarrow \mathcal{F}(\sigma^2;|z)$. For each summand in \eqref{confblock} we have 
\begin{equation*}
N_{\lmb,\mu}(q^{-1},q,u)\sim(-\hbar)^{|\lmb|+|\mu|}\prod_{s\in\lmb}(2\sg-(a_{\mu}(s)+l_{\lmb}(s)+1))\times  \prod_{s\in\mu}(2\sg+(a_{\mu}(s)+l_{\lmb}(s)+1))
\end{equation*}
This occasional power of $\hbar$ cancels due to definition $Z=\hbar^4z$. 

Next, we prove that convergence of series \eqref{confblock} is uniform on $q \in [1-\epsilon,1]$. Note that in this region $\left|\frac{q^{x/2}-q^{-x/2}}{q^{1/2}-q^{-1/2}}\right|\geq x$, for $x \geq 1$ so $L_1$ (from the proof of Prop \ref{prop:convF}) is uniformly bounded by $L_1\geq 1$. And for $L_2$ using evident inequalities \[|q^{y}-q^{-y}|\geq |q^{\operatorname{Re} y}-q^{-\operatorname{Re}y}|,\quad |q^{a_1}-q^{-a_1}|\geq |q^{a_2}-q^{-a_2}|, \quad  \left|\frac{q_1^{a}-q_1^{-a}}{q_1^{1/2}-q_1^{-1/2}}\right|\leq \left|\frac{q_2^{a}-q_2^{-a}}{q_2^{1/2}-q_2^{-1/2}}\right|, \]
for $a_1>a_2>0$ and $0<q_1<q_2<1$, $0<a<1/2$. Therefore we have bound 
\[L_2^{1/2}\geq \frac{|(1-\epsilon)^{\operatorname{Re} \sigma+n_0/2}-(1-\epsilon)^{-\operatorname{Re}\sigma -n_0/2}|}{(1-\epsilon)^{1/2}-(1-\epsilon)^{-1/2}},\]
where $-1/4<\operatorname{Re} \sigma+n_0/2<1/4$. Since bounds for $L_1, L_2$ are uniform we prove that convergence of \eqref{confblock} are also uniform.

Next we study coefficients in \eqref{toconttau}. Using 
formula \eqref{GN} we can rewrite 
\begin{equation}
Z^{(\sg+n)^2}\frac{1}{\prod\limits_{\epsilon=\pm1}((uq^{2n})^{\epsilon}q;q,q)_{\infty}}
 =\frac{1}{\mathsf{G}(1+2(\sg+n);q)\mathsf{G}(1-2(\sg+n);q)} \left(\frac{Z}{(1-q)^4}\right)^{(\sg+n)^2}.
\end{equation}
		Then using Theorem \ref{Glim} and $\frac{Z}{(1-q)^4}\sim z$ we that coefficients goes to the coefficients in $\tau$ function for continuous Painlev\'e equation \eqref{GIL}. 

It remains to show that series \eqref{toconttau} converges uniformly on $q \in [1-\epsilon,1]$. This can be done similar to the series \eqref{confblock} above. 
\end{proof}

It follows from this theorem, Theorem \ref{mainbilin} and Proposition \ref{contlimeq} that $q \rightarrow 1$ limit of bilinear relations \eqref{bilconfrel} is \eqref{bilincont}. This fact was stated in the end of Subsection \ref{ssec:qdefblocks}.

Note that Bonelli, Grassi and Tanzini in their paper \cite{BGT} also
constructed function, which in the limit $q \rightarrow 1$ goes to the
$\tau(\sigma,s|z)$. It is interesting to note that they work in
different region of $q$, namely $|q|=1$ in their paper. The relation between our results and \cite{BGT} will be studied in \cite{BGT2}.

%

\subsection{$q$-deformation of Painlev\'e III($D_8$) algebraic solution}
\label{ssec:algsol}

For special values of parameters $\sg=1/4, s=\pm 1$ the sum in formula \eqref{GIL} can be calculated (see \cite[Sec. 3.3]{BSRamon})
\begin{equation}
\tau(1/4,\mp 1|z)=\frac{1}{\mathsf{G}(1/2)\mathsf{G}(3/2)}z^{1/16}e^{\pm4z^{1/2}} \label{stattau}
\end{equation}
These $\tau$ functions correspond to the algebraic solutions of Painlev\'e III($D_8$) equation \eqref{Pp} $w(z)=\mp z^{1/2}$. These solutions are invariant under B\"acklund transformation so
$\tau(\sg,s|z)\propto \tau(\sg+1/2,s|z)$ and substituting this into \eqref{sgToda} we obtain equation
\begin{equation}
1/2 D^2_{[\log z]}(\tau(z),\tau(z))=\pm z^{1/2}\tau(z)^2. \label{statToda}
\end{equation}
This equation gives us \eqref{stattau} up to multiplying on constant and any power of $z$. 

We want to obtain analogous relation on $q$-deformed conformal blocks.
Recall that element $\pi_2^2\in W$ is an analogue of B\"acklund transformation and acts as  $\mathcal{T}(u,s;q|Z) \mapsto \mathcal{T}(uq,s;q|Z)$. Therefore if $\tau$ function $\mathcal{T}(u,s;q|Z)$ is $u$-inverse invariant then for $u=q^{1/2}$, $s=\pm 1$ we have 
\begin{equation}
\pi_2^2(\mathcal{T}(q^{1/2},\pm 1;q|Z))=\mathcal{T}(q^{3/2},\pm 1;q|Z) =\pm \mathcal{T}(q^{-1/2},\pm 1;q|Z)=\pm \mathcal{T}(q^{1/2},\pm 1;q|Z), \label{eq:q12shifts}
\end{equation}
where we used \eqref{ushift}.

The corresponding function $G(z)=\mp Z^{1/2}$ due to \eqref{FGdefinition} and \eqref{eq:T1T3}. These $G(Z)$ are  algebraic (and B\"acklund invariant) solutions of \eqref{qPp}.
\begin{conj}\label{Pochh}
For any $u$-inverse invariant $\tau$ function $\mathcal{T}(u,s;q|Z)$ we have
\begin{equation}
\mathcal{T}(q^{1/2},\pm 1;q|Z)=\frac{C(q^{1/2};q|Z)}{(q^{3/2};q,q)_{\infty}(q^{1/2};q,q)_{\infty}}(\mp Z^{1/2}q^{1/2};q^{1/2},q^{1/2})_{\infty}. \label{tau14}
\end{equation}
Equivalently, we have relation on $q$-deformed conformal blocks 
\begin{equation}\label{eq:algebconfblock}
(\mp Z^{1/2}q^{1/2};q^{1/2},q^{1/2})_{\infty}=\sum_{n\in\mathbb{Z}}(\mp 1)^n Z^{n^2+n/2}P_n(q)
\mathcal{F}(q^{2n+1/2},q,q|Z),
\end{equation}
where
\begin{equation}
P_n(q)=\frac{\prod\limits_{\epsilon=\pm1}(q^{\frac12\epsilon}q;q,q)_{\infty}}{\prod\limits_{\epsilon=\pm1}((q^{2n+1/2})^{\epsilon}q;q,q)_{\infty}}=\prod_{j=0}^{k-1}\frac{1}{\left((1-q^{j+1/2})(1-q^{-j-1/2})\right)^{k-j}},
\end{equation}
where $k=2n$, for $n>0$ and $k=-2n-1$, for $n<0$.
\end{conj}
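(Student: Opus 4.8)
The plan is to reduce Conjecture \ref{Pochh} to a statement about the Whittaker limit of $q$-deformed conformal blocks at the special point $u=q^{1/2}$, and then to prove that statement by exhibiting the right-hand side of \eqref{eq:algebconfblock} as the unique (suitably normalized) solution of a $q$-difference bilinear equation satisfied by the left-hand side. First I would observe that the two displayed formulas in the conjecture are equivalent: substituting $u=q^{1/2}$, $s=\pm1$ into the rewriting \eqref{eq:Trational} (valid for $u$-inverse invariant $C$, since $C(q^{1/2};q|Z)$ is then symmetric under $u\mapsto u^{-1}$), together with $\hat s = \mp(\,\cdots\,)^2$ evaluated at $u=q^{1/2}$ where the ratio $(u;q)_\infty/(u^{-1};q)_\infty$ and $C(u;q|Z)/C(uq^{-1};q|Z)$ combine to give exactly $\pm1$, turns \eqref{tau14} into \eqref{eq:algebconfblock}; the explicit product form of $P_n(q)$ is just $P_n(q^{1/2};q)$ from the preceding Remark written out. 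So it suffices to prove \eqref{eq:algebconfblock}.

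Next I would derive the bilinear equation that both sides must satisfy. By \eqref{eq:q12shifts}, at $u=q^{1/2}$, $s=\pm1$ the $\tau$ function is (anti)invariant under $\pi_2^2$, i.e.\ $\mathcal{T}(q^{3/2},\pm1;q|Z)=\pm\mathcal{T}(q^{1/2},\pm1;q|Z)$ and likewise $\mathcal{T}(q^{-1/2},\pm1;q|Z)=\pm\mathcal{T}(q^{1/2},\pm1;q|Z)$. Feeding this into the bilinear relation \eqref{qTodai} (which holds granting Conjecture \ref{conj:Toda}, equivalently Conjecture \ref{qHc}) collapses the right-hand side: the term $Z^{1/2}\mathcal{T}(uq)\mathcal{T}(uq^{-1})$ becomes $Z^{1/2}\mathcal{T}(q^{1/2},\pm1;q|Z)^2$, so
\begin{equation}
Z^{1/4}\mathcal{T}(q^{1/2},\pm1;q|qZ)\,\mathcal{T}(q^{1/2},\pm1;q|q^{-1}Z)=\bigl(1+Z^{1/2}\bigr)\mathcal{T}(q^{1/2},\pm1;q|Z)^2 .
\end{equation}
Using \eqref{C10} to absorb the prefactor $C(q^{1/2};q|Z)$, the reduced unknown $\phi(Z):=\mathcal{T}(q^{1/2},\pm1;q|Z)/C(q^{1/2};q|Z)$ — which up to a constant equals $\sum_n(\mp1)^nZ^{n^2+n/2}P_n(q)\mathcal{F}(q^{2n+1/2},q,q|Z)$ divided by $(q^{3/2};q,q)_\infty(q^{1/2};q,q)_\infty$ — then satisfies a clean first-order (in the multiplicative step $Z\mapsto qZ$) bilinear $q$-difference equation with $\phi(Z)=1+O(Z)$.

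Then I would check directly that the claimed closed form $\phi_0(Z)=(\mp Z^{1/2}q^{1/2};q^{1/2},q^{1/2})_\infty$ solves the same reduced equation: the double Pochhammer $(x;q^{1/2},q^{1/2})_\infty=\prod_{i,j\ge0}(1-xq^{(i+j)/2})$ satisfies the shift identity $(xq^{1/2};q^{1/2},q^{1/2})_\infty\cdot(xq^{-1/2};q^{1/2},q^{1/2})_\infty=(x;q^{1/2},q^{1/2})_\infty^2/(1-x)$ — more precisely, shifting $Z\mapsto q^{\pm1}Z$ shifts $x=\mp Z^{1/2}q^{1/2}\mapsto x q^{\pm1/2}$, and the telescoping of the product over the ``diagonal'' slices $i+j=\mathrm{const}$ yields exactly a factor $(1-x)^{-1}$, matching $(1+Z^{1/2})$ after the sign choice. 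Normalizing $\phi_0(0)=1$, the equation — being a first-order recursion for the Taylor coefficients of $\log\phi$ in powers of $Z^{1/2}$ — has a unique solution with that initial condition, so $\phi=\phi_0$, which is \eqref{eq:algebconfblock} up to the overall constant $(q^{3/2};q,q)_\infty(q^{1/2};q,q)_\infty$ that is exactly the $n=0$ denominator; matching the $Z^0$ term (where $\mathcal{F}=1+O(Z)$, $P_0=1$) fixes this constant.

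The main obstacle is that this entire argument is conditional on Conjecture \ref{conj:Toda} (hence Conjecture \ref{qHc}), which is itself unproven; so what is really being claimed is the implication ``Conjecture \ref{qHc} $\Rightarrow$ Conjecture \ref{Pochh},'' or alternatively an independent verification. For an unconditional proof one would instead have to establish the specialization $u=q^{1/2}$ of the bilinear relation \eqref{bilconfrel} directly — i.e.\ prove the combinatorial identity \eqref{eq:algebconfblock} by manipulating the Nekrasov-type sum \eqref{confblock} at $u=q^{2n+1/2}$, presumably via a $q$-analogue of the argument in \cite[Sec. 3.3]{BSRamon} that computed the continuous limit \eqref{stattau}, or by identifying both sides with a partition-theoretic generating function; that combinatorial route, controlling the arm/leg products $N_{\lambda,\mu}$ at the resonant value $u=q^{1/2}$ where many factors vanish, is where the real work lies. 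A secondary technical point is justifying the term-by-term rearrangements and the convergence of the reduced series, which however follows from the estimates already established in Proposition \ref{prop:convF} and the Remarks after \eqref{Tr}.
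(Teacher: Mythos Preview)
Your approach is essentially identical to the paper's proof of Theorem~\ref{Pochht}: specialize the bilinear $\tau$-equation \eqref{qTodai} at $u=q^{1/2}$, $s=\pm1$, use the (anti)invariance \eqref{eq:q12shifts} to collapse it, strip off $C(q^{1/2};q|Z)$ via \eqref{C10}, and identify the unique power-series solution with the double Pochhammer symbol. One small slip: by $u$-inverse invariance one has $\mathcal{T}(q^{-1/2},\pm1;q|Z)=\mathcal{T}(q^{1/2},\pm1;q|Z)$ (no sign), so the product $\mathcal{T}(q^{3/2})\mathcal{T}(q^{-1/2})=\pm\,\mathcal{T}(q^{1/2})^2$ and the collapsed equation reads $(1\pm Z^{1/2})$ rather than your $(1+Z^{1/2})$; this is what produces the $\mp$ in $(\mp Z^{1/2}q^{1/2};q^{1/2},q^{1/2})_\infty$.
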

One can compare \eqref{eq:algebconfblock} with \eqref{eq:Trational} and see that $P_n(q)=(-1)^nP_n(q^{1/2},q)$.
\begin{thm}
If Conjecture \ref{qHc} holds then Conjecture \ref{Pochh} also holds.\label{Pochht}
\end{thm}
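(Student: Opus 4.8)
The plan is to derive the statement by specializing the Toda-type bilinear equation for $\mathcal{T}$ at $u=q^{1/2}$, $s=\pm1$, reducing it to a scalar $q$-difference equation, then checking that the proposed closed form solves that equation and is its unique solution of the relevant type. First, by Theorem~\ref{mainbilin} the hypothesis (Conjecture~\ref{qHc}) gives Conjecture~\ref{conj:Toda}, i.e. $\mathcal{T}(u,s;q|Z)$ satisfies \eqref{qTodai}. Setting $u=q^{1/2}$, $s=\pm1$ and writing $\Phi(Z):=\mathcal{T}(q^{1/2},\pm1;q|Z)$, the $u$-inverse invariance assumption together with \eqref{eq:q12shifts} gives $\mathcal{T}(q^{3/2},\pm1;q|Z)=\pm\Phi(Z)$ and $\mathcal{T}(q^{-1/2},\pm1;q|Z)=\Phi(Z)$, so the cross term in \eqref{qTodai} becomes $\pm Z^{1/2}\Phi(Z)^2$ and the equation collapses to
\[Z^{1/4}\,\Phi(qZ)\,\Phi(q^{-1}Z)=(1\pm Z^{1/2})\,\Phi(Z)^2 .\]
Note here that $u=q^{1/2}$ and all shifts $q^{2n+1/2}$ avoid the excluded set $q^{\mathbb{Z}}$, so all series and Pochhammer symbols in \eqref{T} are well defined.

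Second, I would verify that the right-hand side of \eqref{tau14}, call it $\Psi(Z)$, solves this same equation. The $Z$-independent factor $(q^{3/2};q,q)_\infty(q^{1/2};q,q)_\infty$ cancels in the ratio $\Psi(qZ)\Psi(q^{-1}Z)/\Psi(Z)^2$. The contribution of the $C$-factor is $C(q^{1/2};q|qZ)\,C(q^{1/2};q|q^{-1}Z)/C(q^{1/2};q|Z)^2=Z^{-1/4}$, which is precisely \eqref{C10} at $u=q^{1/2}$. For the Pochhammer factor, set $p=q^{1/2}$ and $w=\mp Z^{1/2}$, so that $\mp(qZ)^{1/2}q^{1/2}=wp^2$, $\mp(q^{-1}Z)^{1/2}q^{1/2}=w$, $\mp Z^{1/2}q^{1/2}=wp$; writing $(x;p,p)_\infty=\prod_{m\ge0}(1-xp^m)^{m+1}$ and comparing the exponents of $1-wp^m$ one gets
\[\frac{(wp^2;p,p)_\infty\,(w;p,p)_\infty}{(wp;p,p)_\infty^{2}}=1-w=1\pm Z^{1/2}.\]
Multiplying the two contributions shows $Z^{1/4}\Psi(qZ)\Psi(q^{-1}Z)=(1\pm Z^{1/2})\Psi(Z)^2$, as required.

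Third, I would establish uniqueness and conclude. Both $\Phi$ and $\Psi$ equal $C(q^{1/2};q|Z)/\big((q^{3/2};q,q)_\infty(q^{1/2};q,q)_\infty\big)$ times a convergent power series in $Z^{1/2}$ with constant term $1$: for $\Psi$ this is immediate, while for $\Phi$ it follows from the representation \eqref{Tr} (equivalently \eqref{eq:Trational}), using that $\tilde s(q^{1/2},\pm1;q|Z)=\mp1$ is $Z$-independent — here one uses $C(q^{1/2};q|Z)=C(q^{-1/2};q|Z)$ from $u$-inverse invariance — and that every conformal block in the sum is $1+O(Z)$, the lowest powers $Z^{0}$ and $Z^{1/2}$ coming from $n=0$ and $n=-1$. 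Hence $R:=\Phi/\Psi$ is a power series in $Z^{1/2}$ with $R(0)=1$, and dividing the two bilinear equations gives $R(qZ)R(q^{-1}Z)=R(Z)^2$. Writing $g=\log R=\sum_{k\ge1}g_k Z^{k/2}$ and comparing coefficients yields $g_k(q^{k/2}+q^{-k/2}-2)=0$; since $|q|\ne1$ we have $q^{k/2}\ne1$ for $k\ge1$, so all $g_k=0$ and $R\equiv1$. Thus $\Phi=\Psi$, which is \eqref{tau14}; dividing through by the common prefactor and substituting the series \eqref{T} for $\mathcal{T}$ turns this into the conformal-block identity \eqref{eq:algebconfblock}.

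The argument has no serious obstacle once Conjecture~\ref{qHc} is granted: it is a specialization of an already-established bilinear relation followed by an elementary uniqueness argument. The places needing care are the Pochhammer computation in the second step, which must be combined correctly with \eqref{C10}, and the bookkeeping in the third step of the half-integer powers of $Z$ and of the $u$-inverse invariance hypothesis that makes $\tilde s$ constant in $Z$.
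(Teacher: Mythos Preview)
Your proposal is correct and follows essentially the same approach as the paper: specialize \eqref{qTodai} at $u=q^{1/2}$, $s=\pm1$ using \eqref{eq:q12shifts} and $u$-inverse invariance, strip off the $C$-prefactor via \eqref{C10}, and identify the remaining series by a uniqueness argument for the scalar equation $f(qZ)f(q^{-1}Z)=(1\pm Z^{1/2})f(Z)^2$ among power series in $Z^{1/2}$ with constant term $1$. Your presentation is slightly more explicit than the paper's (verifying the Pochhammer side directly and spelling out the $\log R$ coefficient comparison), but the logic is the same.
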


\begin{proof}
Due to Theorem \ref{mainbilin} we can use Conjecture \ref{conj:Toda} instead of Conjecture \ref{qHc}. Using \eqref{eq:q12shifts} we can rewrite equation \eqref{qTodai} as 
\begin{equation}\label{eq:Todaalg}
Z^{1/4} \mathcal{T}(q^{1/2},\pm 1;q|qZ) \mathcal{T}(q^{1/2},\pm 1;q|q^{-1}Z) =(1\pm Z^{1/2}) \mathcal{T}(q^{1/2},\pm 1;q|Z)^2.
\end{equation}
On the other hand, using \eqref{Tr} we have 
\[
\mathcal{T}(q^{1/2},\pm 1;q|qZ)={C}(q^{1/2};q|Z)\sum_{n\in\mathbb{Z}} (\mp 1)^n Z^{n^2+n/2}
\frac{\mathcal{F}(q^{2n+1/2};q^{-1},q|Z)}{\prod\limits_{\epsilon=\pm1}((uq^{2n})^{\epsilon}q;q,q)_{\infty}}, 
\]
where we used $u$-inverse invariance of $C(u;q|Z)$. Denote 
\[
f(z)=\mathcal{T}(q^{1/2},\pm 1;q|qZ) \frac{(q^{3/2};q^{1},q^{1})_{\infty}(q^{1/2};q,q)_{\infty}}{C(q^{1/2};q|Z)}
\]
Then $f(z)$ is equal to the right side of \eqref{eq:algebconfblock}. And due to \eqref{C10} the equation \eqref{eq:Todaalg} for this function takes the form 
\begin{equation}
f(qZ) f(q^{-1}Z)=(1\pm Z^{1/2}) f(Z)f(Z).
\end{equation}
The only solution $f(Z)=\sum_{n \geq 0} f_n Z^{n/2}$ with $f_0=1$ of the last equation  is $f(Z)= (\mp Z^{1/2}q^{1/2};q^{1/2},q^{1/2})_{\infty}$.
\end{proof}

We have checked equality \eqref{eq:algebconfblock} up to $Z^4$. 

\begin{Remark}
Let us check the continuous limit of \eqref{tau14}. We want to use Theorem \ref{thmlim} so we set $C(u;q|Z)=C_c(u;q|Z)$. Let 
Then we divide	 \eqref{tau14} by $\Gamma(-(qZ)^{1/4};q^{1/4},q^{1/4})/(q;q,q)^2_{\infty}
$ and get $\mathcal{T}_C$ in the left side (see \eqref{redqToda}).  Using Theorem \ref{thmlim} we have that if $Z=\hbar^4z,\, q=e^{\hbar}$, $q\rightarrow 1^-$ then $\mathcal{T}_c(q^{1/2},\pm1;q|Z)\rightarrow\tau(1/4,\mp 1|z).$

For the continuous limit of $q$-Pochhammer symbol in right side of \eqref{tau14} we have (using \eqref{Pexpr})
\begin{equation}
(\mp Z^{1/2}q^{1/2};q^{1/2},q^{1/2})_{\infty}=\exp\left(-\sum_{m=1}^{\infty}\frac{(\mp Z^{1/2}q^{1/2})^m }{m(1-q^{m/2})^2}\right)\sim e^{\pm 4z^{1/2}},
\end{equation}
where we observed that only the term of sum with $m=1$ survives.
Using this result and Theorem~\ref{Glim} we obtain that limit \eqref{tau14} is known relation \eqref{stattau} (after analyti continuation arownd $Z=0$ as in Subsection \ref{ssec:contlimeq}).
\end{Remark}

\section{Further questions} \label{sec:concl}

\begin{itemize}
 \item The main statements of the paper are based on the conjecture 
 \ref{qHc}, so the first question is to prove it. 
 
 \item The second question is about a generalization of our results to the other discrete Painlev\'e equations from Sakai's tables Fig \ref{Fig:Sakaisurf}, \ref{Fig:Sakaisym}. It is natural to conjecture that $q$-difference Painlev\'e equations $A_{7-N}^{(1)}/E_{N+1}^{(1)}$ surface/symmetry type $N \leqslant 7$  are related to the Nekrasov partition functions for 5d $SU(2)$ gauge theory with $N$ fundamental multiplets. 
 
 It was argued by Seiberg \cite{Seib} that these gauge theories has $E_{N+1}$ global symmetry. It would be interesting to find a physical interpretation of the affine Weyl groups $E_{N+1}^{(1)}$, the symmetry group of the corresponding discrete Painlev\'e equations.
  
 For $N \leq 4$ one can take $q\rightarrow 1$ limit and get the relation between differential Painlev\'e equation with surface type $D_4^{(1)}, D_5^{(1)}, D_6^{(1)}, D_7^{(1)}, D_8^{(1)}$ and Nekrasov partition functions for 4d $SU(2)$ gauge theory with $4, 3, 2, 1, 0$ fundamental multiplets correspondingly. This relation  was stated in \cite{GIL1302} and proven in \cite{ILT} and \cite{KGP}.
 
 \item Elements $T, \pi_2^2 \in W$ act on the $\tau$ function $\mathcal{T}(u,s;q|Z)$ defined in \eqref{T} in a clear manner
 \[ T\colon \mathcal{T}(u,s;q|Z) \mapsto \mathcal{T}(u,s;q|qZ),\quad  \text{and}\quad \pi_2^2 \colon \mathcal{T}(u,s;q|Z) \mapsto \mathcal{T}(u q,s;q|Z) \] 
 It is natural to ask for the action of whole group $W$. From the Table \ref{lettertable} we see that remaining transformations are $Z \mapsto Z^{-1}$ and $q \mapsto q^{-1}$. The second transformation is transparent due to $q$-deformed conformal block relation \eqref{blocksim} and $q$ Pochhammer symbol relation \eqref{qtrans}. But it is unclear what is (if exists) the meaning of $Z \mapsto Z^{-1}$ symmetry for the formula \eqref{T}. In particular it is unclear what is the form of the corresponding relation between $q$-deformed conformal blocks (Nekrasov partition functions for pure theory) $\mathcal{F}(Z)$ and $\mathcal{F}(Z^{-1})$.
 
 One more remark is in order. Nekrasov partition function for pure 5d $SU(2)$ gauge theory is equal (up to simple factor) to the topological string partition function for local $\mathbb{P}^1\times \mathbb{P}^1$ geometry~\cite{Igbal:2003},~\cite{Eguchi:2003}. For such partition functions there exists fiber-base duality \cite{Mitev} which interchange two factors $\mathbb{P}^1$. In terms of the functions $\mathcal{F}(u;q^{-1},q|Z^{-1})$ this duality has the form 
 \begin{equation}\label{Zinv}
 \frac{\mathcal{F}(u; q^{-1}, q|Z)}{(uq;q,q)_{\infty}^2} =
 \frac{\mathcal{F}(uZ; q^{-1}, q|Z^{-1})}{(uZq;q,q)_{\infty}^2}. 
 \end{equation}
 But this is an equality of formal power series on variables $u, uZ$. For $|q|\neq 1$ function $\mathcal{F}(u; q^{-1}, q|Z)$ cannot be expanded as such convergent series since has poles for $u=q^n$ and these poles accumulate near $u=0$. Moreover, computer calculations shows that \eqref{Zinv} does not hold for the function $\mathcal{F}(u; q^{-1}, q|Z)$ defined by convergent series \eqref{confblock}. It is an interesting question whether exist relation of the type \eqref{Zinv} for $|q|\neq 1$ and can it be used to the $Z \mapsto Z^{-1}$ transformation of $\tau$-function $\mathcal{T}(u,s;q|Z)$.

  \item Continuous Painlev\'e equations can be described as 
   non-autonomous Hamiltonian system. It is natural to ask for the difference analogue of this fact. 
   
  Painlev\'e III($D_8$) has two types of bilinear forms, namely Toda-like and Okamoto-like (see \cite{BSRamon}). In Subsection \ref{ssec:contlimeq}  we give $q$-deformation of the Toda-like bilinear form, it is natural to ask for the $q$-deformation of Okamoto-like equations.
  
  \item It is interesting to note that there exists another $q$-difference equation wichi has the Painleve~III$(D_8)$ equation in the $q\rightarrow 1$ limit. This equation has the form 
  \begin{equation}\label{eq:qA7}
  \overline{W}W^2\underline{W}=Z(1-W).
  \end{equation}
  and the limit was shown in \cite{Grammaticos:2002} (see also \cite{Ramani:2015}) In terms of Sakai classification this equation corresponds to $A_7^{(1)}$ surface (see \cite[eq. (2.44)]{SakaiLax}), so should not be equivalent to \eqref{qPp}. It is an~interesting question whether exists a way to express solutions of \eqref{eq:qA7} in terms of $q$-deformed conformal blocks.
\end{itemize}

\section{Acknowledgments}
We thank  V. Adler, G. Bonelli, B. Feigin, A. Grassi, P. Gavrylenko, N. Iorgov, O. Lisovyy, A. Sciarappa, V. Spiridonov, K.~Sun, A. Tanzini, Y. Zenkevich for interest in our work and discussions. We are grateful to A. Dzhamay, K. Kajiwara, H. Sakai and T. Takenawa for many explanations about discrete Painleve equations. It was H. Sakai who explained that we should use $A_7^{(1)\prime}$ surface for $q$-deformation of PIII$(D_8)$ equation. M. B. is grateful to INFN Trieste and SISSA for the hospitality during his visit of Italy and to T. Takenawa for the hospitality during his visit of Japan.

This work has been funded by the Russian Academic Excellence Project
’5-100’.  A.S. was also supported in part by joint NASU-CNRS project F14-2016, M.B. was also supported in part  by Young Russian Mathematics award and RFBR grant mol\_a\_ved 15-32-20974.
Study of $q$-deformed bilinear relations was performed under a grant of Russian Science Foundation (project No. 14-12-01383).

\appendix

\section{$q$-special functions}
\label{app:q}
In this Appendix we collect some facts about $q$-series that we used in the paper. For the references see \cite[Sec. 10]{AAR}, \cite{Nish}, \cite{Spir}.

Infinite multiple $q$-deformed Pochhammer symbol is defined by
\begin{equation}
(Z;t_1,\ldots t_N)_{\infty}=\prod_{i_1,\ldots i_N=0}^{\infty}\left(1-Z\prod_{k=1}^Nt_k^{i_k}\right).
\end{equation}
Product exists if all $|t_k|<1$. Function is symmetric with respect to $t_k$. In this region function is analytic function of all arguments.
Infinite $q$-Pochhammer symbols satisfy
\begin{equation}
(Z;t_1,\ldots t_N)_{\infty}/(Zt_1;t_1,\ldots t_N)_{\infty}=(Z|t_2,\ldots t_N)_{\infty}, \quad (Z;q)_{\infty}/(Zq;q)_{\infty}=1-Z \label{shift} 
\end{equation}

The function $(Z;t_1,\ldots t_N)_{\infty}$ can be rewritten as
\begin{multline}
(Z;t_1,\ldots t_N)_{\infty}=\exp\left(\sum_{i_1,\ldots i_N=0}^{\infty}\log\left(1-Z\prod_{k=1}^Nt_k^{i_k}\right)\right)=\\=
\exp\left(-\sum_{i_1,\ldots i_N=0}^{\infty}\sum_{m=1}^{\infty}\frac{Z^m}m\prod_{k=1}^Nt_k^{mi_k}\right)
=\exp\left(-\sum_{m=1}^{\infty}\frac{Z^m}m\prod_{k=1}^N\frac1{1-t_k^m}\right),
\label{Pexpr}
\end{multline}
where sum converge when $|Z|<1, |t_k|\neq 1$.
Using this expression the function $(Z;t_1,\ldots t_N)_{\infty}$ can be defined to the region with some $|t_k|$ greater then $1$.
Using this definition we see that
\begin{equation}
(Z;t_1^{-1},t_2,\ldots t_N)_{\infty}=(Zt_1;t_1,\ldots t_N)^{-1}_{\infty} \label{qtrans}
\end{equation}
In the paper we use only $N=1,2$ $q$-Pochhammer symbols.

Introduce trigonometric (or $q$-) Gamma function and trigonometric (or $q$-) Barnes $\mathsf{G}$ function according to \cite{Nish}
\begin{eqnarray}
\Gamma(x;q)&=&(1-q)^{1-x}\frac{(q;q)_{\infty}}{(q^x;q)_{\infty}},\\
\mathsf{G}(x;q)&=&(1-q)^{-\frac{(x-1)(x-2)}2}\prod\limits_{k=0}^{\infty}\frac{(1-q^{x+k})^{k+1}}{(1-q^{k+1})^{k+2-x}}=
(1-q)^{-\frac{(x-1)(x-2)}2}\frac{(q;q)^{x-1}_{\infty}(q^x;q,q)_{\infty}}{(q;q,q)_{\infty}}.\label{GN}
\end{eqnarray}
From \eqref{shift} we have
\begin{align}
\Gamma(u+1;q)&=[u]_q\Gamma(u;q),\;\textrm{ where } [u]_q=\frac{1-q^u}{1-q}\\
\mathsf{G}(u+1;q)&=\Gamma(u;q)\mathsf{G}(u;q).
\end{align}
Function $[u]_q$ is called $q$-number.

We use part of the Theorem 4.4 from \cite{Nish}
\begin{thm}
As $q\rightarrow 1^-$ $\Gamma(u;q)$ and $\mathsf{G}(u;q)$ converge to $\Gamma(u)$ and $\mathsf{G}(u)$.
Convergence is uniform on any compact set in the domain $\mathbb{C}\backslash \mathbb{Z}_{\leq 0}$.\label{Glim}
\end{thm}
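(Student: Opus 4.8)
The plan is to reduce, through the functional equations $\Gamma(u+1;q)=[u]_{q}\Gamma(u;q)$ and $\mathsf{G}(x+1;q)=\Gamma(x;q)\mathsf{G}(x;q)$, to establishing the convergence on a single vertical strip of width one, and on that strip to compare the explicit logarithmic series of the $q$-functions with the classical Binet-type integral representations. Throughout write $q=e^{-\varepsilon}$ with $\varepsilon\to0^{+}$ (which is exactly $q\to1^{-}$).

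For the $q$-Gamma function, expanding the Pochhammer symbols in \eqref{GN} by means of \eqref{Pexpr} gives the closed form
\[
\log\Gamma(u;q)=(1-u)\log(1-q)+\sum_{m=1}^{\infty}\frac{q^{mu}-q^{m}}{m(1-q^{m})}.
\]
Putting $t=m\varepsilon$ one has $\dfrac{q^{mu}-q^{m}}{m(1-q^{m})}=\varepsilon\,g_{u}(m\varepsilon)$ with $g_{u}(t)=\dfrac{e^{-ut}-e^{-t}}{t(1-e^{-t})}$, so the series is a Riemann sum for $\int_{0}^{\infty}g_{u}$. Because $g_{u}(t)=(1-u)/t+O(1)$ near $t=0$ this sum diverges logarithmically; its divergent part is $(u-1)\log\varepsilon+o(1)$ and cancels against $(1-u)\log(1-q)=(1-u)\log\varepsilon+o(1)$. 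I would make this precise by splitting the sum at $m\varepsilon=1$: on the tail $t\ge1$, where $g_{u}$ is smooth with exponential decay ($\operatorname{Re}u$ bounded below is used here), an Euler--Maclaurin estimate gives convergence to $\int_{1}^{\infty}g_{u}$; on the head $t<1$ one adds and subtracts the principal part $(1-u)/t$ to extract the logarithm and bound the remainder. The limit is exactly Binet's formula $\log\Gamma(u)=\int_{0}^{\infty}\bigl[\frac{e^{-ut}-e^{-t}}{1-e^{-t}}+(u-1)e^{-t}\bigr]\frac{dt}{t}$, and all of the estimates are uniform for $u$ in a compact subset of $\{1\le\operatorname{Re}u\le2\}$, which yields uniform convergence there. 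To extend, one uses $[u]_{q}=(1-q^{u})/(1-q)\to u$ uniformly on compacta: a compact $K\subset\mathbb{C}\setminus\mathbb{Z}_{\le0}$ is covered by finitely many unit strips, every point of $K$ is brought into $\{1\le\operatorname{Re}u\le2\}$ by finitely many steps $u\mapsto u\pm1$, and multiplying or dividing by the intervening factors $[u\pm j]_{q}$ --- which are bounded and bounded away from $0$ on $K$ exactly because $K$ misses $\mathbb{Z}_{\le0}$, the common pole set of $\Gamma(\,\cdot\,;q)$ and $\Gamma$ --- transports the uniform convergence to $K$.

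The $q$-Barnes $\mathsf{G}$-function is handled by the identical scheme. From \eqref{GN} and \eqref{Pexpr},
\[
\log\mathsf{G}(x;q)=-\frac{(x-1)(x-2)}{2}\log(1-q)-(x-1)\sum_{m=1}^{\infty}\frac{q^{m}}{m(1-q^{m})}-\sum_{m=1}^{\infty}\frac{q^{mx}-q^{m}}{m(1-q^{m})^{2}},
\]
and with $q=e^{-\varepsilon}$ the last series is a Riemann sum $\varepsilon\sum_{m}h_{x}(m\varepsilon)$ with $h_{x}(t)=\dfrac{e^{-xt}-e^{-t}}{t(1-e^{-t})^{2}}=\dfrac{1-x}{t^{2}}+\dfrac{(x-1)^{2}}{2t}+O(1)$ near $t=0$. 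Subtracting these two principal parts near the origin, one checks that the emerging $1/\varepsilon$ and $\log\varepsilon$ divergences are matched precisely by the $\log(1-q)$-prefactor (this recovers the asymptotics $\log\mathsf{G}(x;q)\sim-x^{2}\log(1-q)/2$) and by the single series $\sum q^{m}/(m(1-q^{m}))=-\log(q;q)_{\infty}$, whose $q\to1$ behaviour is already under control from the Gamma step. The finite part that survives is the Barnes integral representation of $\log\mathsf{G}(x)$, uniformly on compacta of the strip; the functional equation $\mathsf{G}(x+1;q)=\Gamma(x;q)\mathsf{G}(x;q)$ together with the Gamma result then propagates the convergence to all of $\mathbb{C}\setminus\mathbb{Z}_{\le0}$, just as above.

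The main obstacle is the quantitative, uniform-in-$u$ (resp.\ in $x$) control of the head of the Riemann sum near $t=0$: once the explicit $(1-q)$-powers are removed, one must show that the remaining sum converges, uniformly on compacta, to the convergent Binet integral --- equivalently, that the Euler--Maclaurin correction terms for $\varepsilon\sum_{m}g_{u}(m\varepsilon)$ (resp.\ $\varepsilon\sum_{m}h_{x}(m\varepsilon)$) tend to $0$ as $\varepsilon\to0^{+}$, with bounds locally uniform in the parameter. This is the analytic heart of the statement; the reduction to a strip via the functional equations and the bookkeeping near $\mathbb{Z}_{\le0}$ are then routine.
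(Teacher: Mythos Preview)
The paper does not give its own proof of this theorem: it is quoted verbatim as ``part of Theorem 4.4 from \cite{Nish}'' (Ueno--Nishizawa) and used as a black box in Subsection~\ref{ssec:contlim}. So there is nothing in the paper to compare your argument against.

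Your outline is a correct and standard route. The logarithmic series you write for $\log\Gamma(u;q)$ and $\log\mathsf{G}(x;q)$ follow from \eqref{GN} and \eqref{Pexpr}, your small-$t$ expansions of $g_u$ and $h_x$ are right (in particular $h_x(t)=\frac{1-x}{t^2}+\frac{(x-1)^2}{2t}+O(1)$), and the integral you identify as the limit is the Gauss--Malmst\'en form of Binet's formula, whose $u$-derivative is the Gauss integral for $\psi(u)$. Reducing to a strip via the functional equations and then propagating uniform convergence through finitely many factors $[u\pm j]_q$ is exactly how one avoids the poles at $\mathbb{Z}_{\le0}$. For what it is worth, the proof in \cite{Nish} proceeds in the same spirit: an Euler--Maclaurin expansion of $\log\Gamma(u;q)$ and $\log\mathsf{G}(x;q)$ is derived first, and the $q\to1^-$ limit is then read off term by term, with uniformity coming from explicit remainder bounds. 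Your Riemann-sum-plus-principal-part decomposition is essentially a hands-on version of their Euler--Maclaurin step; what you flag as ``the main obstacle'' (uniform control of the head near $t=0$ after subtracting the singular parts) is precisely the content of their remainder estimates.
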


Introduce elliptic Gamma function (see e.g. \cite{Spir})
\begin{equation}
\Gamma(Z;t,q)=\frac{(tqZ^{-1};t,q)_{\infty}}{(Z;t,q)_{\infty}}.\label{GPoch}
\end{equation}
Elliptic gamma function should not be confused with trigonometric gamma and Barnes $\mathsf{G}$-function with one deformation argument and with standard Gamma function
and Barnes $\mathsf{G}$-function without deformation arguments.
Elliptic Gamma function satisfy relations
\begin{equation}
\Gamma(qZ;t,q)=\theta(Z;t)\Gamma(Z;t,q), \quad \Gamma(tZ;t,q)=\theta(Z;q)\Gamma(Z;t,q),\label{Gshift}
\end{equation}
where $\theta$ function defined as
\begin{equation}
\theta(Z;q)=\frac{1}{(q;q)_{\infty}}\sum_{k\in\mathbb{Z}}(-1)^k q^{\frac{k(k-1)}{2}}Z^k=(Z;q)_{\infty}(qZ^{-1};q)_{\infty}.
\end{equation}
Here last equality is Jacobi triple product.
Shift relations \eqref{Gshift} on elliptic Gamma functions could be easily obtained from \eqref{shift}.
It follows from definition that $\theta$ function satisfies
\begin{equation}
\theta(qZ;q)=-Z^{-1}\theta(Z;q)=\theta(Z^{-1};q). \label{tshift}
\end{equation}
From \eqref{Gshift} and first equality \eqref{tshift} we obtain useful relation
\begin{equation}
\frac{\Gamma(uq,q;q)\Gamma(uq^{-1};q,q)}{\Gamma(u;q,q)^2}=-qu^{-1}. \label{G2p}
\end{equation}

\section{Bilinear relation for generic $q_1, q_2$.}
\label{app:toralg}

Bilinear relations on $q$-deformed conformal blocks exist not only in case $q_1=q^{-1}$, $q_2=q$. In order to write the conjecture we introduce bilinear combination 
\begin{equation}
\widehat{\calF}_d(u,q_1,q_2|Z)=\sum_{2n \in \mathbb{Z}}\left(\frac{u^{2dn}(q_1q_2)^{4dn^2}Z^{2n^2}}{(uq_1^{4n-2},u^{-1}q_1^{-4n-2})^{(1)}_{\infty}(u q_1^{-1}q_2^{4n+1},u^{-1}q_1^{-1}q_2^{-4n+1})^{(2)}_{\infty}}\calF_n^{(1)}(q_1^{2d}Z)\, \calF_n^{(2)}(q_2^{2d}Z) \right),
\end{equation}
where we use notations
\begin{align*}
&\calF^{(1)}_n(z)=\calF(u q_1^{4n},q_1^2,q_1^{-1}q_2|z),\qquad \calF^{(2)}_n(z)=\calF(u q_2^{4n},q_1q_2^{-1},q_2^2|z),\\
&(uq_1^{4n-2},u^{-1}q_1^{-4n-2})^{(1)}_{\infty}=(uq_1^{4n-2};q_1^{-2}, q_1^{-1}q_2)_\infty(u^{-1}q_1^{-4n-2};q_1^{-2}, q_1^{-1}q_2)_\infty,\\
&(u q_1^{-1}q_2^{4n+1},u^{-1}q_1^{-1}q_2^{-4n+1})^{(2)}_{\infty}=(u q_1^{-1}q_2^{4n+1};q_1^{-1}q_2,q_2^{2})_\infty(u^{-1}q_1^{-1}q_2^{-4n+1};q_1^{-1}q_2,q_2^2)_\infty,
\end{align*}
and ensure conditions $|q_2|<1<|q_1|$ (in other sectors one can use \eqref{qtrans}).

We will also the following version of Nekrasov partition function
\begin{equation}\label{eq:Flozenge}
\calF_{\lozenge}(u,q_1,q_2|Z)=\calF_{\lozenge,\mathrm{1-loop}}(u;q_1,q_2)\sum_{\lambda_1,\lambda_2| d(\vec{\lambda}=0,1)} Z^{\frac{|\lambda_1|+|\lambda_2|}2}\frac{1}{\prod_{i,j=1}^2 N^{\NS}_{\lambda_i,\lambda_j}({q_1,q_2,u_i}/{u_j})}.
\end{equation}
Here $d(\vec{\lambda})$ denoted the difference between total numbers of white and black boxes in $\lambda_1,\lambda_2$ in chess board coloring. The appropriate Nekrasov functions used in \eqref{eq:Flozenge} have the form
\begin{align}
\!\!\!\!\calF_{\lozenge,\mathrm{1{-}loop}}(u;q_1,q_2)=&\frac{1}{(uq_1^{-2};q_1^{-2},q_2^2)_\infty(uq_1^{-1}q_2;q_1^{-2},q_2^2)_\infty(u^{-1}q_1^{-2};q_1^{-2},q_2^2)_\infty(u^{-1}q_1^{-1}q_2;q_1^{-2},q_2^2)_\infty},
\\
N^{\lozenge}_{\lambda,\mu}(q_1,q_2,u)
=&
\prod_{s\in \lambda^\lozenge}
(1-u q_2^{-a_\mu(s)-1}q_1^{\ell_\lambda(s)}) \cdot 
\prod_{s \in \mu^\lozenge}
(1-u q_2^{a_\lambda(s)}q_1^{-\ell_\mu(s)-1}),
\end{align}
with
$\lambda^{\lozenge}=\{s \in \lambda| a_\mu(s)+\ell_\lambda(s)+1 \equiv0\bmod 2\}$, $\mu^{\lozenge}=\{s \in \mu| a_\lambda(s)+\ell_\mu(s)+1 \equiv0\bmod 2\}$ (cf. \eqref{confblock},\eqref{Nlm}).

\begin{conj}
The following relation holds 
\begin{equation} \label{eq:genbilin}
\widehat{\calF}_0(u,q_1,q_2|Z)=\calF_{\lozenge}(u,q_1,q_2|Z),\qquad \widehat{\calF}_1(u,q_1,q_2|Z)=\left(1-q_2q_1Z^{1/2}\right) \calF_{\lozenge}(u,q_1,q_2|Z).
\end{equation}
\end{conj}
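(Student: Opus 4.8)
The plan is to recognize both sides of \eqref{eq:genbilin} as $K$-theoretic equivariant instanton partition functions and to deduce the identity from an orbifold blow-up relation of Nakajima--Yoshioka type. The series $\calF(u,q_1,q_2|Z)$ of \eqref{confblock} is the instanton part of the $5$d pure $SU(2)$ Nekrasov partition function on $\mathbb{C}^2$ with equivariant parameters $q_1,q_2$ and Coulomb modulus $u$ \cite{NO}. The series $\calF_{\lozenge}$, in which the products over boxes run only over those $s$ with $a_\mu(s)+\ell_\lambda(s)+1\equiv 0\bmod 2$, should be the corresponding partition function on the $\mathbb{Z}_2$-orbifold $\mathbb{C}^2/\mathbb{Z}_2$ (equivalently on its minimal resolution, the $A_1$ ALE space), restricted to the trivial-representation sector; the $\lozenge$-condition is precisely the statement that a box survives the orbifold projection. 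The right-hand side $\widehat{\calF}_d$ is then, up to an overall $Z$-independent (perturbative) normalization, the localization expansion of this ALE partition function at the two torus-fixed points of the exceptional curve: the sum over $2n\in\Z$ is the sum over the first Chern class through the exceptional $\mathbb{P}^1$ (the fractional flux), $\calF^{(1)}_n$ and $\calF^{(2)}_n$ are the ordinary $\mathbb{C}^2$-partition functions at the two fixed points with their natural equivariant parameters $(q_1^{2},q_1^{-1}q_2)$ and $(q_1q_2^{-1},q_2^{2})$ and Coulomb modulus shifted by the flux (hence $uq_1^{4n}$, $uq_2^{4n}$), the monomial $u^{2dn}(q_1q_2)^{4dn^2}Z^{2n^2}$ is the classical/Chern--Simons contribution of the flux, and the double $q$-Pochhammer products come from the flux-dependent shift of the perturbative one-loop factors at the two fixed points.

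I would carry this out combinatorially, via the $2$-quotient of Young diagrams. Removing the $2$-core and passing to the $2$-quotient gives a bijection $\lambda\leftrightarrow(\mathrm{core}_2(\lambda);\lambda^{(0)},\lambda^{(1)})$ under which $|\lambda|=|\mathrm{core}_2(\lambda)|+2(|\lambda^{(0)}|+|\lambda^{(1)}|)$ and, crucially, the boxes of $\lambda$ with $a_\lambda(s)+\ell_\lambda(s)+1$ even are precisely the images of the boxes of $\lambda^{(0)}\sqcup\lambda^{(1)}$, with arm and leg lengths halved; the analogous ``colored'' statement holds for the relative quantity $a_\mu(s)+\ell_\lambda(s)+1$ entering $N^{\lozenge}_{\lambda,\mu}$. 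The key lemma to establish is the resulting product formula: $N^{\lozenge}_{\lambda,\mu}$ factorizes, up to an explicit monomial in $q_1,q_2,u$ determined by the two $2$-cores, into a product of ordinary Nekrasov factors of the quotient diagrams with rescaled arguments. Substituting into $\calF_{\lozenge}$ and organizing the double sum over $(\lambda_1,\lambda_2)$ by the common charge $n$ of their $2$-cores, the diagram sums collapse to $\calF^{(1)}_n(q_1^{2d}Z)\,\calF^{(2)}_n(q_2^{2d}Z)$ while the core bookkeeping reproduces the prefactor in $\widehat{\calF}_d$, including the denominators $(uq_1^{4n-2},u^{-1}q_1^{-4n-2})^{(1)}_{\infty}(uq_1^{-1}q_2^{4n+1},u^{-1}q_1^{-1}q_2^{-4n+1})^{(2)}_{\infty}$. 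The two values $d\in\{0,1\}$ single out the two cosets of the charge lattice --- equivalently, insertion of $\mathcal{O}$ versus $\mathcal{O}(1)$ along the exceptional divisor, which also explains the shifts $q_1^{2d}Z$, $q_2^{2d}Z$ and the power $(q_1q_2)^{4dn^2}$; the mismatch between the two sectors amounts to a single box, producing the overall factor $1-q_1q_2Z^{1/2}$ in the second identity of \eqref{eq:genbilin}.

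The hardest part will be the precise bookkeeping of prefactors --- fixing the monomial $u^{2dn}(q_1q_2)^{4dn^2}$ and the argument shifts from the classical term, matching the two double-Pochhammer products with the flux-shifted one-loop factors, fixing the overall ($Z$-independent) normalization, and especially deriving the exact correction $1-q_1q_2Z^{1/2}$ for $d=0$ rather than some other polynomial of degree one in $Z^{1/2}$, which should trace back to the vanishing-theorem contribution of the tautological bundle at the two fixed points. A more elementary fallback, which however yields only verification and not a structural proof, is to compare \eqref{eq:genbilin} order by order in $Z^{1/2}$: each coefficient is a rational function of $u,q_1,q_2$, and one may hope to close a recursion in the order. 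In either route the natural consistency checks are to specialize $q_1^{-1}=q_2=q^{1/2}$, under which $\widehat{\calF}_1$ and $\widehat{\calF}_0$ become the two sides of Conjecture \ref{qHc} (so that the two identities of \eqref{eq:genbilin} together recover it), and then to take $q\to1$ to obtain the relation \eqref{bilincont} of \cite{KGP}.
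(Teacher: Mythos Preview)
The paper does not prove this statement: it is explicitly stated as a Conjecture, and the only evidence offered is computer verification (analytically up to $Z^3$, numerically up to $Z^{12}$) together with a geometric interpretation. So there is no ``paper's own proof'' to compare against; you are proposing a proof for something the authors leave open.

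Your geometric picture is compatible with what the paper says, but not identical. The paper interprets the \emph{first} equality as an equality of equivariant Euler characteristics for Nakajima $A_1^{(1)}$ quiver varieties with stability parameters in different chambers --- a wall-crossing statement (with references to \cite{BBFLT}, \cite{IMO}, \cite{Ohkava} for the cohomological limit) --- and the \emph{second} equality as a $K$-theoretic blow-up equation in the sense of \cite{Nakajima:2005}. Your ALE/orbifold localization plus $2$-quotient combinatorics is a natural way to try to realize either of these concretely, and the consistency check via $q_1q_2=1$ reducing to Conjecture~\ref{qHc} is exactly right.

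Two concrete cautions about the sketch. First, your reading of $d\in\{0,1\}$ as selecting ``two cosets of the charge lattice'' does not match the definition: in $\widehat{\calF}_d$ the sum runs over $2n\in\Z$ for both values of $d$, and $d$ only enters through the prefactor $u^{2dn}(q_1q_2)^{4dn^2}$ and the argument shifts $q_i^{2d}Z$. This is the insertion of a determinant line bundle (your own $\mathcal{O}$ vs.\ $\mathcal{O}(1)$), not a lattice-coset selection; the subsequent claim that ``the mismatch between the two sectors amounts to a single box'' producing $1-q_1q_2Z^{1/2}$ therefore needs a different justification. Second, the factorization of the off-diagonal $N^{\lozenge}_{\lambda,\mu}$ (with $u\neq 1$) under the $2$-quotient, and matching the resulting core contributions to the specific double Pochhammer denominators appearing in $\widehat{\calF}_d$, is the real content of the argument; you correctly flag this as the hardest step, but as written it is an assertion rather than a lemma. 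In short, you have a plausible and well-motivated strategy that goes beyond what the paper attempts, but it remains a sketch with substantive gaps precisely at the places you yourself identify.
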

As a combination of two formulas in \eqref{eq:genbilin} one get generalization of Conjecture \ref{qHc} to the case $q_1q_2\neq 1$. 
The formulas \eqref{eq:genbilin} were checked by computer calculation up to $Z^3$ analytically and up to $Z^{12}$ numerically.

Geometrical meaning of the first equality in \eqref{eq:genbilin} is following. Left side and right side are generating functions 
of equivariant Euler characteristics of sheaf $\mathcal{O}$ for Nakajima quiver varieties  for the quiver $A_1^{(1)}$. Another interpretation of the corresponding manifolds are partial compactification of the instantons on the minimal resolution of $\mathbb{C}^2/\mathbb{Z}_2$. But these compactifications are different as $(\mathbb{C}^*)^2$~manifolds, in terms of quiver varieties the stability parameters belong to different chambers. Nevertheless conjectural formula \eqref{eq:genbilin} states that corresponding equiavariant Euler characteristics are the same. In the conformal (or cohomology, or just $q_1,q_2 \rightarrow 1$) limit the corresponding statement was made in \cite[Sec 4.1]{BBFLT} (see also \cite{IMO},\cite{Ohkava}). 

The second equality in \eqref{eq:genbilin} can be viewed as a blow-up equation, similar to one studied in paper \cite{Nakajima:2005}. Its conformal limit coincides with first two formulas in \cite[eq. 4.22]{KGP} (using appropriate version of the AGT relation \cite{BF}, \cite{BMT1}).


\noindent \textsc{Landau Institute for Theoretical Physics, Chernogolovka, Russia,\\
Skolkovo Institute of Science and Technology, Moscow, Russia,\\
National Research University Higher School of Economics, Moscow, Russia,\\
Institute for Information Transmission Problems, Moscow, Russia,\\
Independent University of Moscow, Moscow, Russia}

\emph{E-mail}:\,\,\textbf{mbersht@gmail.com}\\

\noindent\textsc{National Research University Higher School of Economics, Moscow, Russia\\
Skolkovo Institute of Science and Technology, Moscow, Russia,\\
Bogolyubov Institute for Theoretical Physics, Kiev, Ukraine}

\emph{E-mail}:\,\,\textbf{shch145@gmail.com}

\end{document}